\begin{document}

\title{\huge Layout Decomposition for Triple Patterning Lithography}

\author{
\IEEEauthorblockN{Bei Yu, Kun Yuan\IEEEauthorrefmark{2}, Boyang Zhang, Duo Ding, David Z. Pan}
\IEEEauthorblockA{ECE Dept. University of Texas at Austin, Austin, TX USA 78712 \\}
\IEEEauthorblockA{\IEEEauthorrefmark{2}	Cadence Design Systems, Inc., San Jose, CA USA 95134\\
Email: \{bei, dpan\}@cerc.utexas.edu
}
}

\maketitle

\newtheorem{problem}{\textbf{Problem}}
\newtheorem{define}{\textbf{Definition}}
\newtheorem{theorem}{\textbf{Theorem}}
\newtheorem{lemma}{\textbf{Lemma}}
\newtheorem{conjecture}{Conjecture}

\begin{abstract}
As minimum feature size and pitch spacing further decrease, triple patterning lithography (TPL) is a possible 193nm extension along the paradigm of double patterning lithography (DPL). However, there is very little study on TPL layout decomposition.  In this paper, we show that TPL layout decomposition is a more difficult problem than that for DPL. We then propose a general integer linear programming formulation for TPL layout decomposition which can simultaneously minimize conflict and stitch numbers.
Since ILP has very poor scalability,  we propose three acceleration techniques without sacrificing solution quality:
independent component computation, layout graph simplification, and bridge computation. 
For very dense layouts, even with these speedup techniques, ILP formulation may still be too slow. Therefore, we propose a novel vector programming formulation for TPL decomposition, and solve it through effective semidefinite programming (SDP) approximation.
Experimental results show that the ILP with acceleration techniques can reduce 82\% runtime compared to the baseline ILP.
Using SDP based algorithm, the runtime can be further reduced by 42\% with some tradeoff in the stitch number (reduced by 7\%) and the conflict (9\% more).  However, for very dense layouts, SDP based algorithm can achieve $140\times$ speed-up even compared with accelerated ILP.
\end{abstract}

\section{Introduction}

As minimum feature size further scales, the semiconductor industry is greatly challenged of patterning sub-22nm half-pitch due to the delay of viable next generation lithography such as Extreme Ultra Violet (EUV).
Double patterning lithography (DPL) is widely recognized as a promising solution for 32nm, 22nm, and possibly 16nm volume chip production.

As shown in Fig. \ref{fig:DPL}, the key challenge of DPL lies in the decomposition process by which the original layout is divided into two masks.
Then, there are two exposure/etching steps, through which the layout can be produced.
The advantage of this approach is that the effective pitch can be doubled, which improves the lithography resolution.
During the decomposition, when the distance between the two patterns is less than minimum colorable distance $min_s$, they need to be assigned to different masks to avoid a conflict.
Sometimes conflict can be solved by splitting a pattern into two touching parts, called stitches.
In Fig. \ref{fig:DPL}(b), polygon $a$ is split into two polygons $a_1$ and $a_2$ in order to resolve the decomposition conflicts.
However the introduced stitches lead to yield loss due to overlay error  \cite{DPL_ICCAD08_Yang}.
Therefore, two of the main challenges in layout decomposition are conflict and stitch minimization.

The paradigm of double patterning may be further extended to triple patterning lithography (TPL).
Industry has already explored the test-chip patterns with triple patterning or even quadruple patterning \cite{2009Intel}.
By using TPL, we can achieve further feature-size scaling through pitch-tripling.

It shall be noted that in DPL, even with stitch insertion, there may be native conflicts \cite{DPL_SPIE07_Anton}.
Fig. \ref{fig:TPL}(a) shows a three-way conflict cycle between features a, b and c, where any two of them are within the $min_s$.
As a consequence, there is no chance to produce a conflict-free solution using DPL decomposition.
However, we can easily resolve this problem if the layout is decomposed into three masks as shown in Fig. \ref{fig:TPL}(b). 
Yet this does not mean TPL layout decomposition problem becomes easier. Actually since the features can be packed closer, the problem turns out to be more difficult, as to be shown in Section  \ref{chap:problem}.

\begin{figure}[bt]
    \centering
    \subfigure[]{\includegraphics[width=0.15\textwidth]{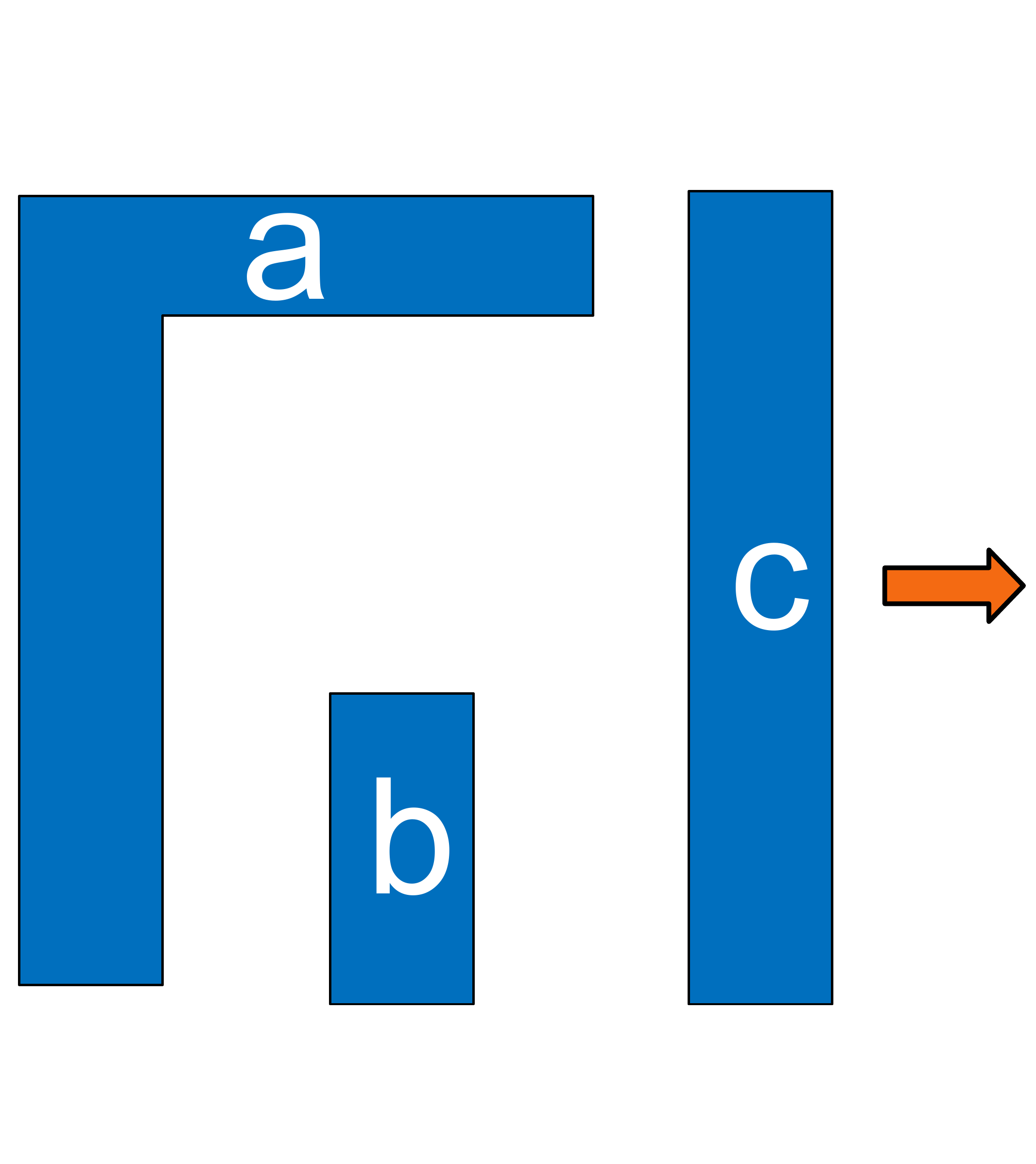}}
    \subfigure[]{\includegraphics[width=0.15\textwidth]{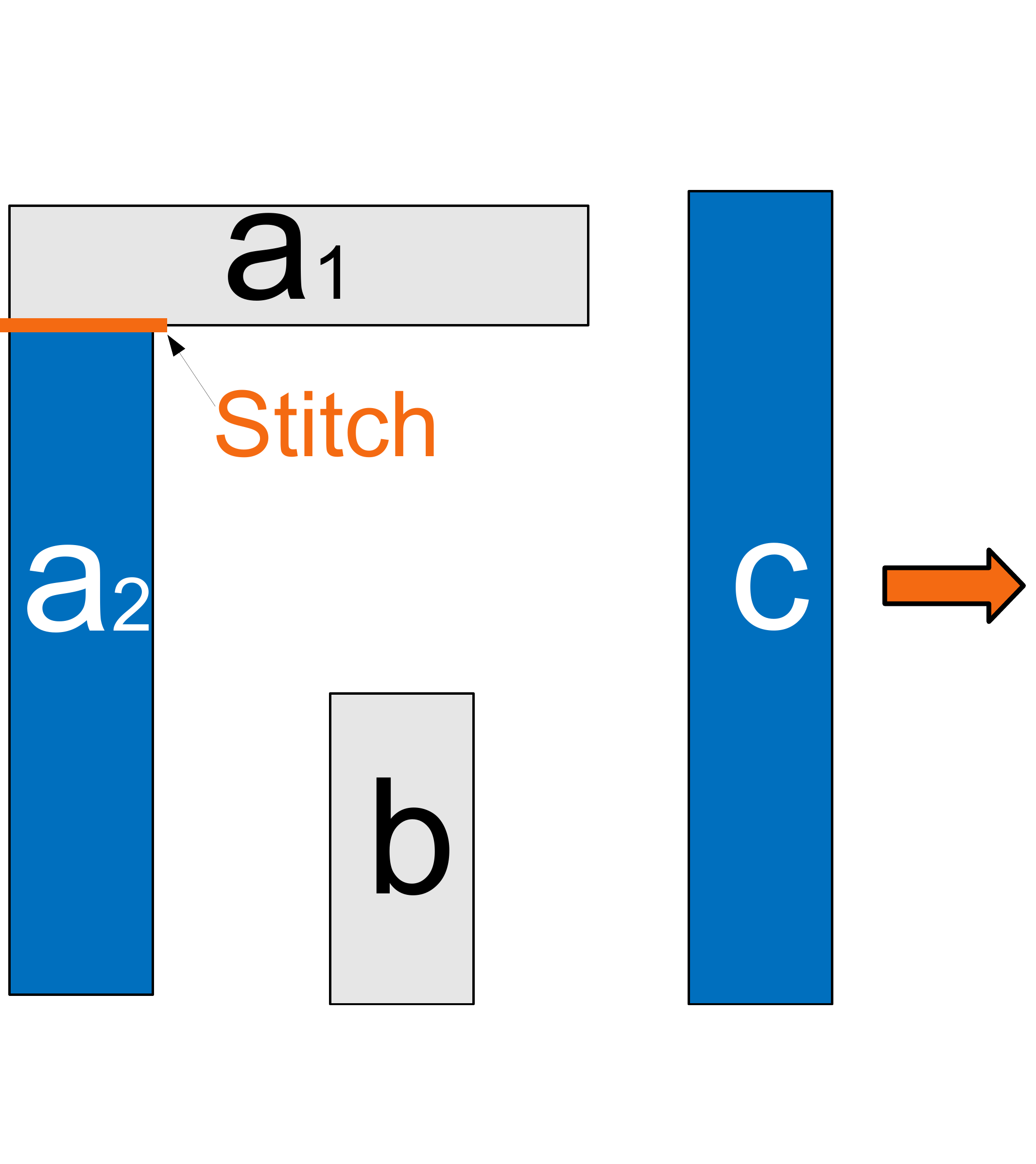}}
    \subfigure[]{\includegraphics[width=0.13\textwidth]{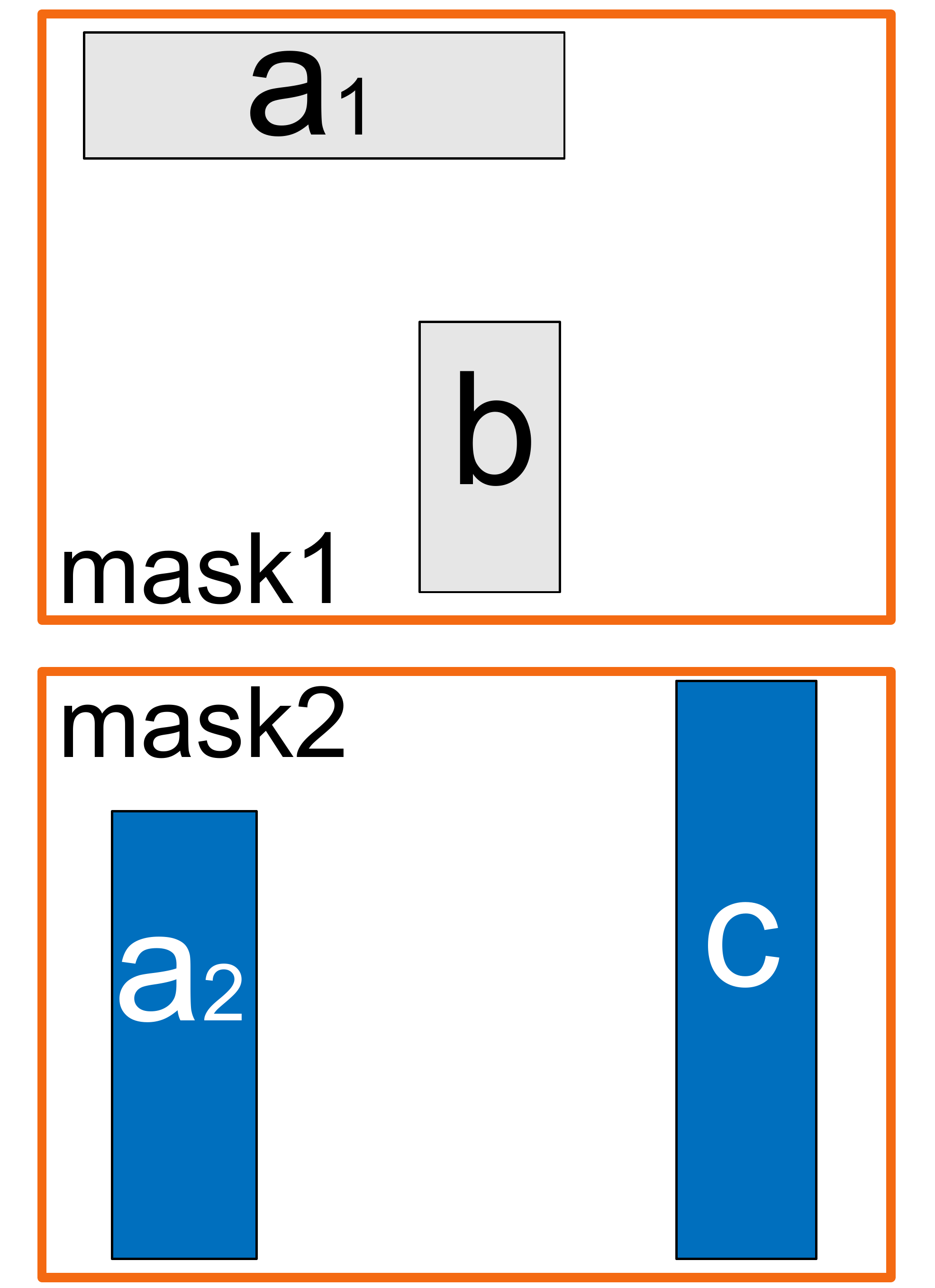}}
    \caption{~In DPL, a single layer is decomposed into two masks and the pitch can be increased effectively.}
    \label{fig:DPL}
\end{figure}

\begin{figure}[bt]
    \centering
    \subfigure[]{\includegraphics[width=0.22\textwidth]{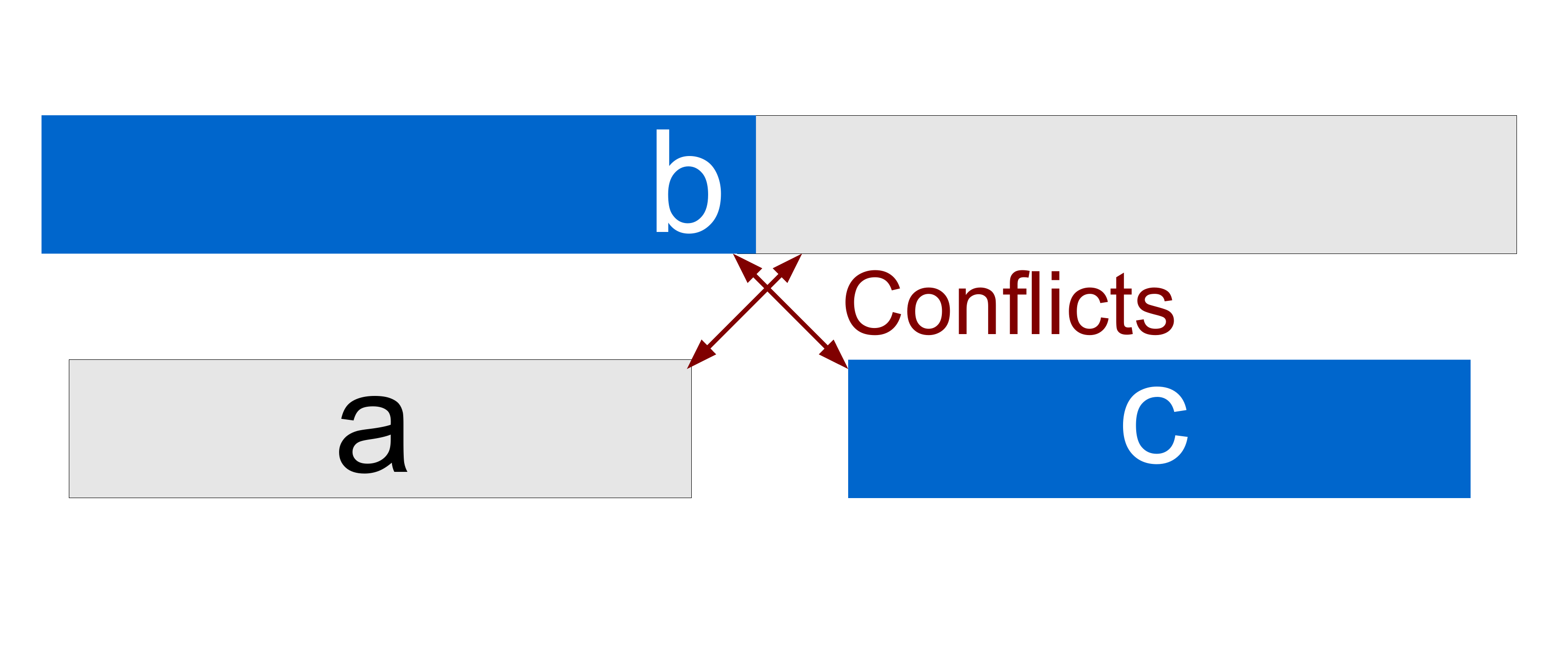}}
    \subfigure[]{\includegraphics[width=0.22\textwidth]{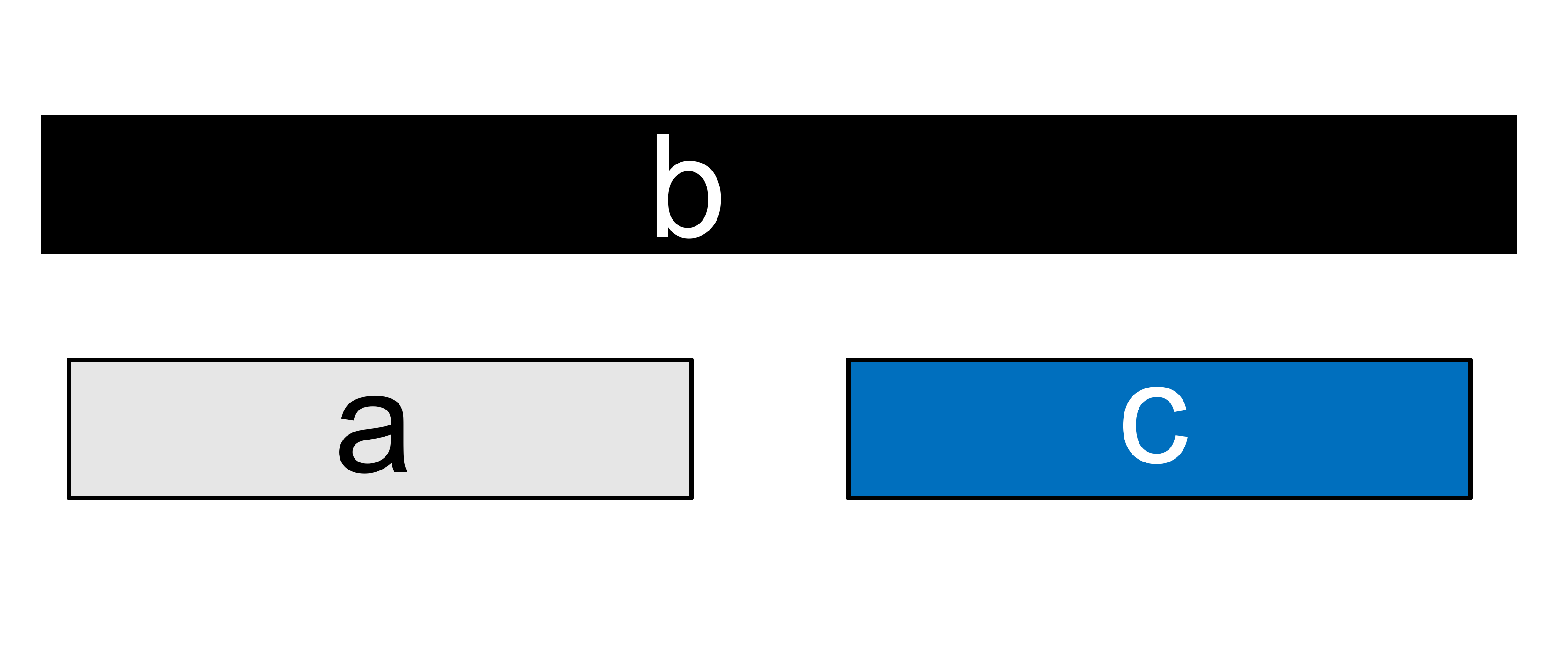}}
    \caption{~(a) In DPL, even stitch insertion can not avoid native conflict.~(b) The native conflicts can be resolved by TPL.}
    \label{fig:TPL}
\end{figure}

Much previous research focuses on the double patterning layout decomposition problem, which is generally regarded as a two-coloring problem on a conflict graph.
Integer Linear Programming (ILP) is adopted in \cite{DPL_ICCAD08_Kahng}\cite{DPL_ISPD09_Yuan} to minimize the stitch number and/or the conflict number.
Xu et al. \cite{DPL_ICCAD09_Xu} propose an efficient graph reduction-based algorithm for stitch minimization,
and Yang et al. \cite{DPL_ASPDAC2010_Yang} propose a fast min-cut based approach.
A matching based decomposer is proposed to minimize both the conflict and the stitch numbers \cite{DPL_ISPD2010_Xu}.
To resolve the native conflict, several works introduce layout modification to further minimize the conflict number \cite{DPL_ICCAD2010_Yuan}\cite{DPL_ICCAD2010_Chen}\cite{DPL_ICCAD09_Hsu}.
However, layout modification may cause new problems, i.e., timing closure, hotspot.

Until now, there are very few investigations on TPL layout decomposition.
\cite{TPL_SPIE08_Cork} proposes a triple patterning coloring algorithm, which adopts SAT Solver. However, their work only deals with contact arrays, not general layout structures with wires, contacts, and so on. Besides, it does not involve any stitch minimization.
\cite{TPL_SPIE2011_Chen}\cite{TPL_SPIE2011_Mebarki} propose a self-aligned triple patterning (SATP) process to extend 193nm immersion lithography to half-pitch 15nm patterning.
But the SATP process cannot insert any stitch, which would greatly constrain the possible layout patterns that are decomposable \cite{SADP_2010_Li}.  To our best knowledge, there is no study so far on layout decomposition on TPL for general layout styles. 

In this paper, we propose the first systematic study on layout decomposition for triple patterning lithography.  We first formulate a general ILP formulation for TPL layout decomposition to simultaneously minimize conflict and stitch.
To improve scalability, we further propose three acceleration techniques without loss of solution quality:
layout graph simplification, independent component computation and bridges computation.
A semidefinite programming based approximation algorithm is further proposed to improve scalability.
Semidefinite programming is an extension of linear programming to approximately solve NP-hard problems and it has been successfully applied to many combinatorial problems \cite{SDP_1996Boyed}\cite{1998Karger}. 
Our main contributions of this paper include:
\begin{itemize}
    \item General ILP formulation to simultaneously minimize conflict and stitch for TPL layout decomposition;
    \item Three acceleration techniques to improve ILP scalability;
    \item A novel vector programming formulation for TPL decomposition and its semidefinite programming based approximation algorithm which can further deal with very dense layouts where even accelerated ILP becomes too slow;
    \item Our experimental results are very promising in terms of quality of results and runtime tradeoff. 
\end{itemize}

The rest of the paper is organized as follows:
in Section \ref{chap:problem}, we discuss the problem formulation and then analyze problem complexity.
The basic algorithm and some acceleration techniques are described in section \ref{chap:basic}.
Section \ref{chap:SDP} proposes a semidefinite programming based algorithm to further accelerate the basic algorithm.
Section \ref{chap:result} presents the experiment results, followed by conclusion in Section \ref{chap:conclusion}.

\section{Problem Formulation and Complexity}
\label{chap:problem}

Some preliminaries on TPL are provided in this section, including some definitions and the problem formulation.
We also demonstrate the complexity of the problem.

\subsection{Problem Formulation}

\begin{figure}[bt]
	\centering
	\subfigure[]{\includegraphics[width=0.18\textwidth]{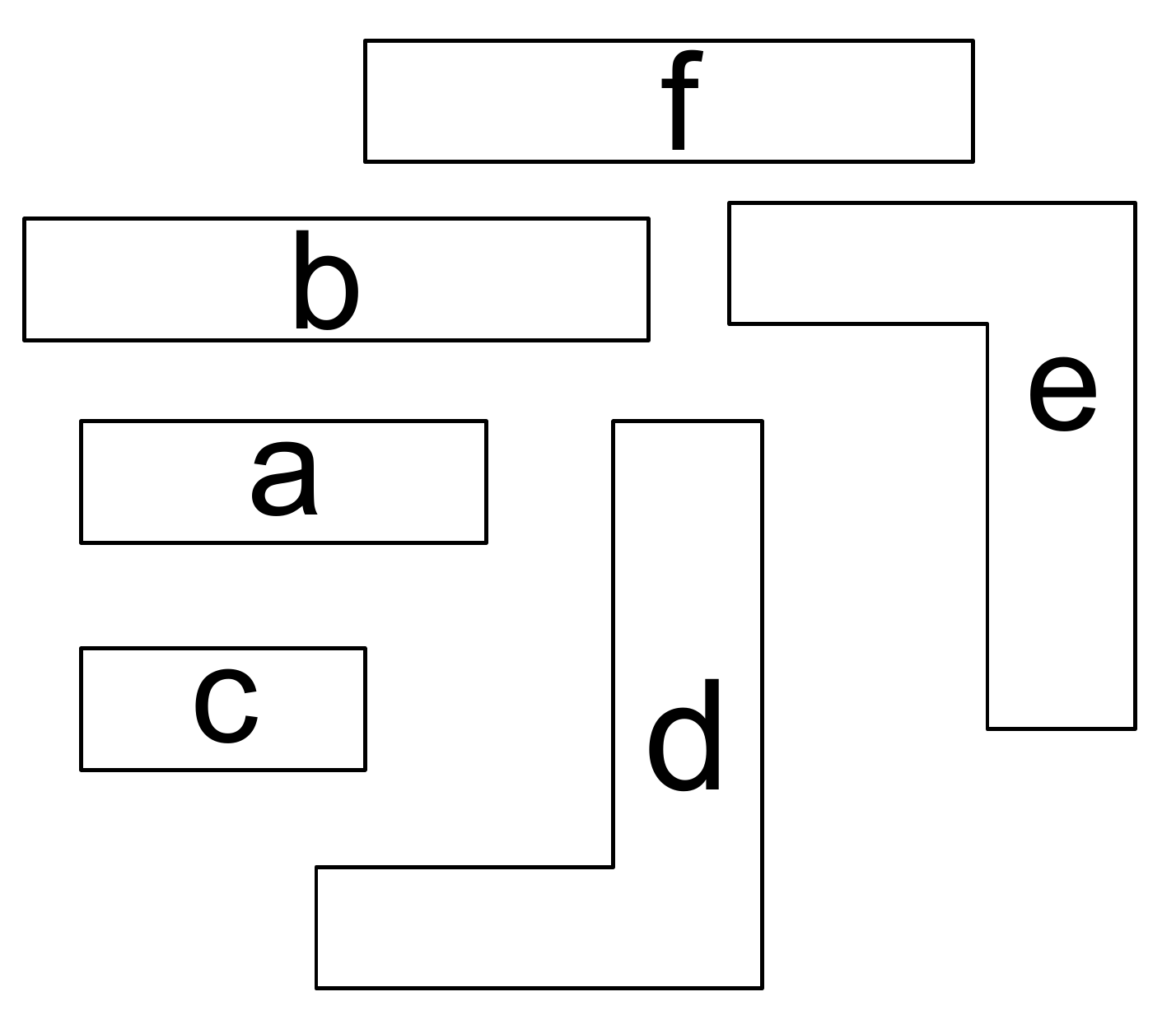}}
	\subfigure[]{\includegraphics[width=0.18\textwidth]{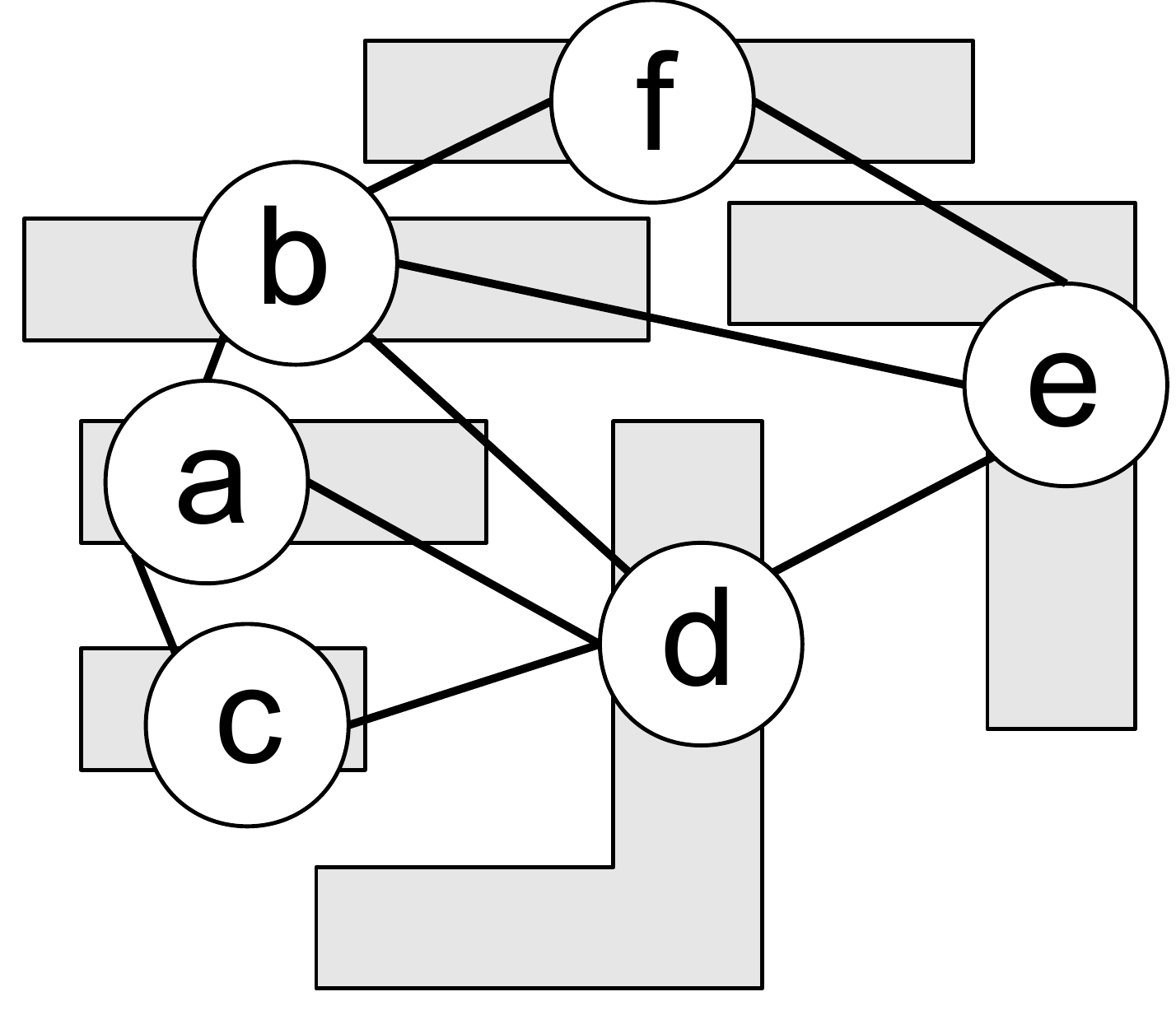}}
	\hspace{.1em}
	\subfigure[]{\includegraphics[width=0.18\textwidth]{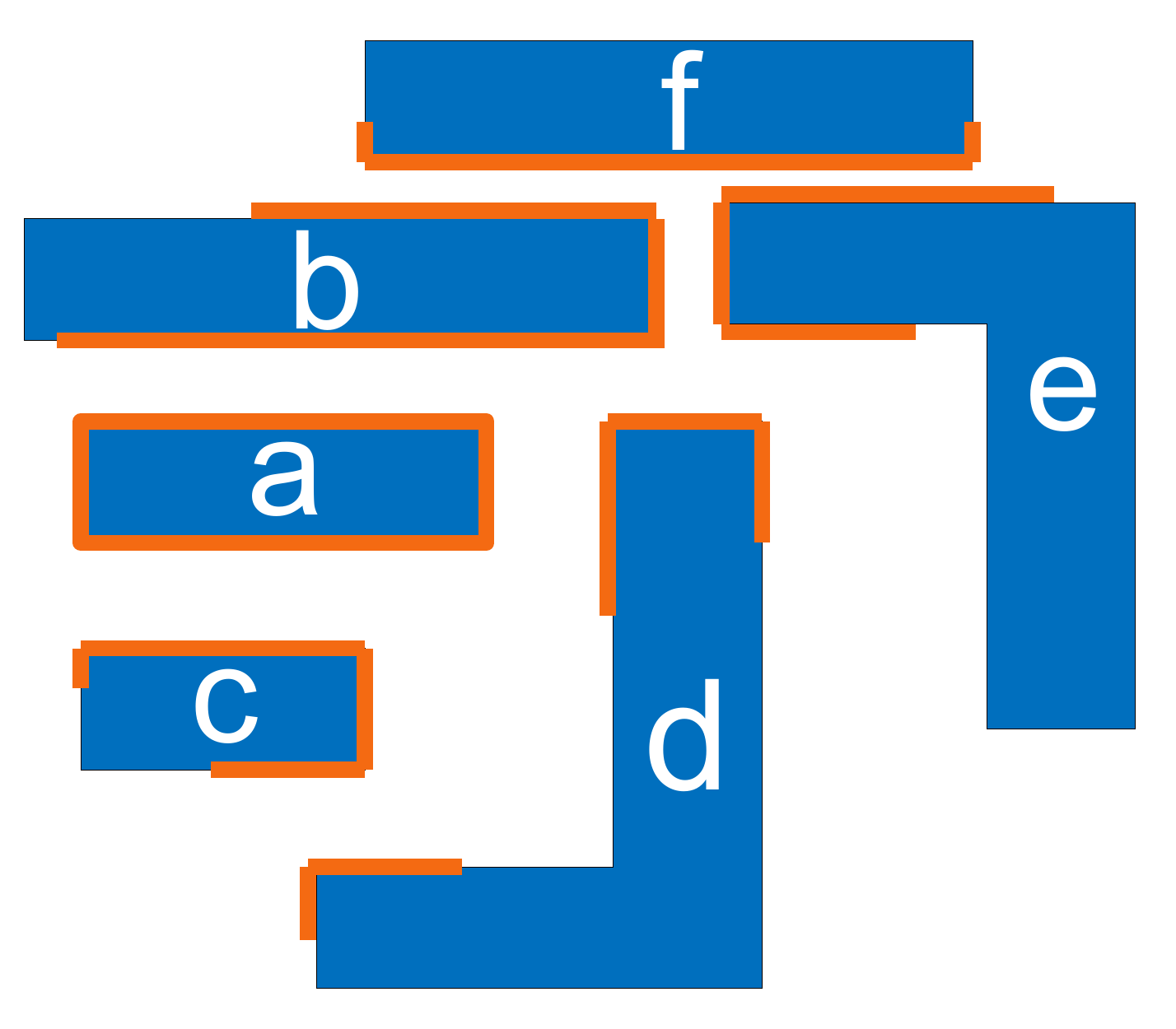}}
	\subfigure[]{\includegraphics[width=0.18\textwidth]{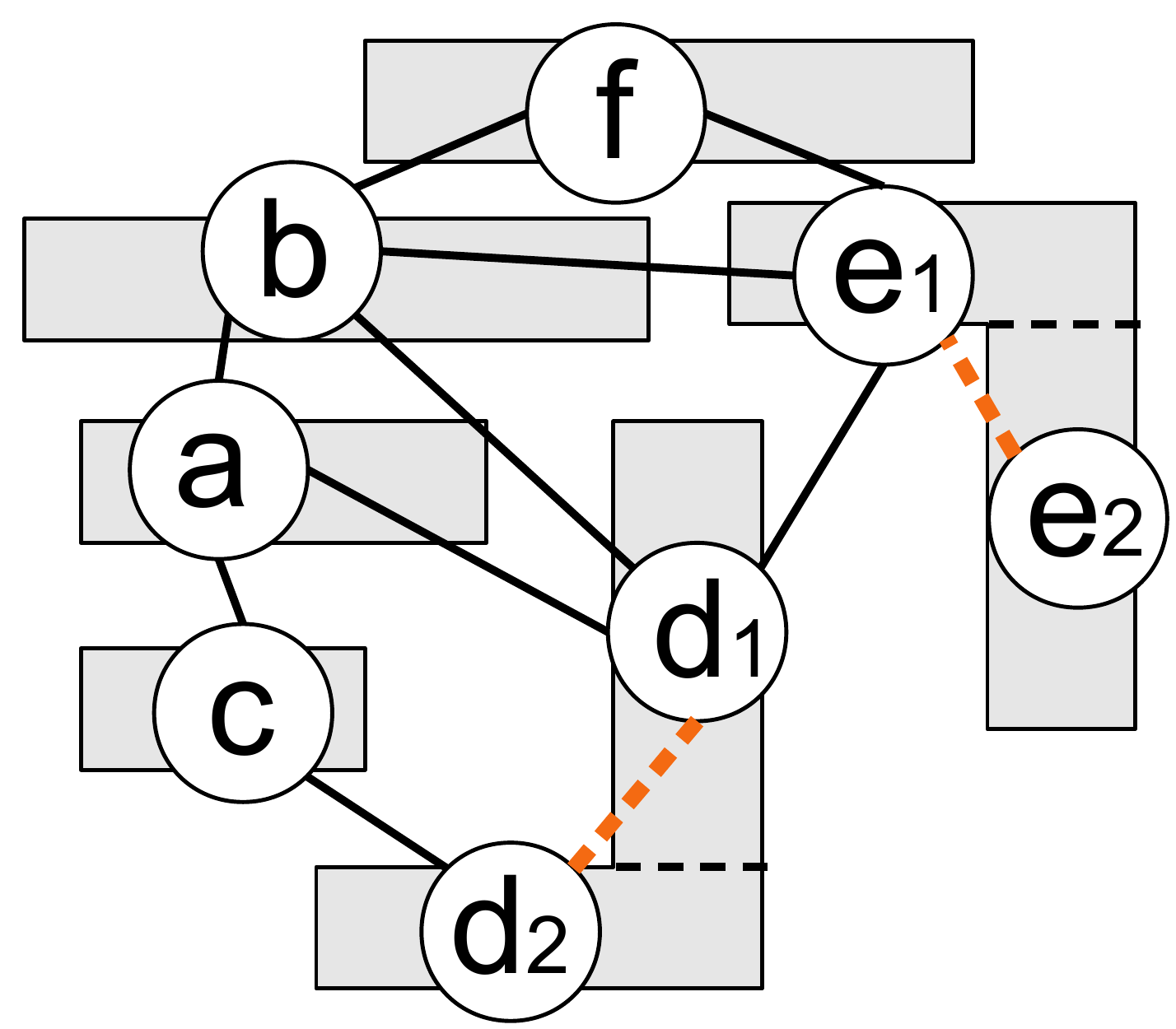}}
	\caption{~Layout graph construction and decomposition graph construction~(a) Input layout representing as irregular polygons.~(b)Corresponding layout graph, where all edges are conflict edges.~(c)The node projection.~(d)Corresponding decomposition graph, where dash edges are stitch edges.}
	\label{fig:input}
\end{figure}

Given a layout which is specified by features in polygonal shapes, a layout graph \cite{DPL_ICCAD08_Kahng} and a decomposition graph \cite{DPL_ICCAD2010_Yuan}  are constructed.


\begin{define}[\textbf{Layout Graph}]
The \textit{ layout graph} (LG) is an undirected graph whose nodes are the given layout's polygonal shapes and where an edge exists if and only if the two polygonal shapes are within minimum coloring distance $min_s$ of each other.
\end{define}

Fig. \ref{fig:input}(a) gives an example of an input layout; the corresponding layout graph is shown in Fig. \ref{fig:input}(b).
All the edges in a layout graph are called Conflict Edges (CE).
A conflict exists if and only if two nodes are connected by a CE and are in the same mask.
In other words, each conflict edge is a conflict candidate.


\begin{define}[\textbf{Decomposition Graph}]
Given a layout represented by a set of polygonal shapes, the \textit{decomposition graph} (DG) is an undirected graph with a single set of nodes $V$, and two sets of edges, $CE$ and $SE$, which contain the \textit{conflict edges} and \textit{stitch edges}, respectively.
$V$ has one or more nodes for each polygonal shape and each node is associated with a polygonal shape.
An edge is in $CE$ iff the two polygonal shapes are within minimum coloring distance $min_s$ of each other.
An edge is in $SE$ iff there is a stitch between the two nodes which are associated with the same polygonal shape.
\end{define}

On the layout graph, the node projection is first performed, where projected segments are highlighted by bold lines in Fig. \ref{fig:input}(c).
Based on the projection result, all the legal splitting locations are computed.
Then the decomposition graph is constructed, as shown in Fig. \ref{fig:input}(d).
Note that the conflict edges are marked as black edges, while stitch edges are marked as dash edges.


\begin{problem}[\textbf{TPL layout decomposition}]
Given a layout which is specified by features in polygonal shapes, the layout graph and the decomposition graph are constructed.
Our goal is to assign all the nodes in the decomposition graph to three masks to minimize the stitch number and the conflict number.
\end{problem}

\subsection{Problem Complexity}

At first glance, the layout decomposition is similar to graph coloring problem.
However, since stitch edges are introduced, the problem to minimize conflict and stitch is more complicated.
For double patterning case, deciding whether a graph is 2-colorable is easy by determining if there exists odd cycles.
For conflict and stitch minimization, if a decomposition graph is planar DPL layout decomposition can be solved in polynomial time \cite{DPL_ISPD2010_Xu}.
In order to solve the triple patterning issue, the problem becomes more complicated.

\begin{lemma}
\label{lem:dp3}
Deciding whether a \textit{planar} graph is 3-colorable is NP-complete \cite{1976Garey}.
\end{lemma}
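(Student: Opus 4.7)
The plan is to establish the lemma by the classical reduction from $3$-colorability of general graphs to $3$-colorability of planar graphs, due to Garey, Johnson and Stockmeyer. Membership in NP is immediate: given a proposed coloring, we simply check in linear time that each edge has endpoints of different colors. The bulk of the work is NP-hardness, which I would derive by a polynomial-time reduction from the general $3$-coloring problem.

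First, I would take an arbitrary instance $G=(V,E)$ of $3$-coloring and fix any drawing of $G$ in the plane, so that edges cross only in pairs at isolated points. The key ingredient is a \emph{planar crossing gadget}: a small planar graph $H$ with four distinguished ``terminals'' $a_1,b_1,a_2,b_2$ (arranged around $H$'s outer face in the cyclic order $a_1,a_2,b_1,b_2$) with the property that, in any proper $3$-coloring of $H$, one has $\mathrm{color}(a_1)\neq\mathrm{color}(b_1)$ and $\mathrm{color}(a_2)\neq\mathrm{color}(b_2)$, while conversely every assignment of colors to the terminals satisfying these two inequalities extends to a proper $3$-coloring of $H$. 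In other words, $H$ faithfully simulates two independent edges $a_1b_1$ and $a_2b_2$ that cross, but without any crossing.

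Given such an $H$, I would build a planar graph $G'$ from $G$ as follows: traverse each edge $e\in E$ in the drawing, and at each crossing between two edges $u_1v_1$ and $u_2v_2$ splice in a copy of $H$, subdividing the crossed edges so that the gadget's terminals are identified with the subdivision points (with new ``copy'' vertices introduced along each edge and connected in series by edges, so that the original endpoint relations are preserved by color-equality through properly chosen auxiliary gadgets, or more simply by chaining crossing gadgets so that each copied terminal inherits the color of the original endpoint). By construction $G'$ is planar, and its size is polynomial in $|V|+|E|$ plus the number of crossings (which is at most quadratic in $|E|$). The gadget property then gives a bijection between proper $3$-colorings of $G$ and proper $3$-colorings of $G'$, establishing the reduction.

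The main obstacle is designing and verifying the crossing gadget $H$: one must exhibit an explicit small planar graph and check by case analysis on the $3^4$ possible terminal color patterns that exactly those patterns respecting the two simulated edges are realizable. I would use the standard ``$W$''-shaped gadget built from a handful of triangles that forces the required color-equality/inequality pattern, and verify its correctness by inspection; once the gadget is in hand, the remainder of the reduction (planarizing the embedding, propagating terminal colors back to the original vertices, and arguing the polynomial size bound) is routine.
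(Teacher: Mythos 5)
The paper itself gives no proof of this lemma; it is quoted as a known theorem of Garey, Johnson and Stockmeyer, so what you are really reconstructing is their reduction. Your overall plan is the right classical one: NP membership by checking a coloring, NP-hardness by taking a drawing of an arbitrary graph $G$ and replacing each crossing with a planar gadget. However, the gadget property you specify is wrong, and the error is not cosmetic. The standard crossing gadget $H$ with terminals $a_1,b_1$ (opposite) and $a_2,b_2$ (opposite) forces $\mathrm{color}(a_1)=\mathrm{color}(b_1)$ and $\mathrm{color}(a_2)=\mathrm{color}(b_2)$ in every proper $3$-coloring, and every choice of colors for one terminal from each pair extends; the inequality of the original edge $\{u,v\}$ is then enforced by a single \emph{genuine} edge joining $v$ to the last terminal of the chain of gadgets strung along $u$--$v$. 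You instead require $\mathrm{color}(a_1)\neq\mathrm{color}(b_1)$ and $\mathrm{color}(a_2)\neq\mathrm{color}(b_2)$. With that specification, an edge $\{u,v\}$ that is crossed $k\ge 2$ times becomes a chain $u\neq x_1\neq x_2\neq\cdots\neq v$, and over three colors a chain of disequalities does not compose to $u\neq v$ (take $u$ and $v$ both red with $x_1$ blue). So your $G'$ can be $3$-colorable while $G$ is not, and the reduction is unsound. Your parenthetical remark about terminals ``inheriting the color of the original endpoint'' is in fact the correct mechanism, but it directly contradicts the inequality property you assign to $H$, so as written the construction does not cohere.

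Two smaller points: the correspondence between colorings of $G$ and of $G'$ is not a bijection (many colorings of $G'$ restrict to the same coloring of $G$); all you need, and all the gadget gives, is that one exists iff the other does. And you implicitly rely on NP-completeness of $3$-coloring for general graphs as the source problem; that is fine, but it should be stated as the starting point of the reduction rather than left tacit.
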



Lemma \ref{lem:dp3} can be naturally extended to general graph.
Based on Lemma \ref{lem:dp3}, the methodology in \cite{TPL_SPIE08_Cork} is not suitable for TPL decomposition:
SAT solver can only work for 3-colorable layout graph, which cannot be checked in polynomial time.

\begin{lemma}
Coloring a 3-colorable graph with 4 colors is NP-complete \cite{1992Khanna}.
\label{lem:assign4color}
\end{lemma}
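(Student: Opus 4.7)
The plan is a gap-producing reduction from a known NP-complete problem such as 3-colorability or NAE-3-SAT. From an input instance $I$ we construct a graph $G_I$ with the following gap property: if $I$ is a yes-instance then $G_I$ is 3-colorable (hence trivially 4-colorable), while if $I$ is a no-instance then $G_I$ has chromatic number at least five. Given such a reduction, any polynomial-time algorithm that produces a 4-coloring of every 3-colorable graph would decide $I$ in polynomial time: run the algorithm on $G_I$ and accept if and only if its output is a valid 4-coloring. On a yes-instance the graph is 3-colorable and the algorithm must succeed; on a no-instance no valid 4-coloring of $G_I$ exists, so the verification step must reject. Membership in NP is immediate because a 4-coloring is a polynomial-size, polynomial-time verifiable certificate.

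The technical core is the gadget construction. First I would introduce a rigid 5-chromatic ``palette'' subgraph (or a family of such gadgets glued together) and attach it to every variable or literal vertex of the base instance through connector gadgets that force any hypothetical 4-coloring to exclude the same color at all interface nodes. This canonicalizes any 4-coloring of $G_I$ into a 3-coloring on a sub-structure isomorphic to the base instance, so that a 4-coloring of $G_I$ decodes into a solution of $I$. Conversely, given a solution of $I$, one exhibits an explicit 3-coloring of all of $G_I$ by composing the instance solution with a fixed 3-coloring of each gadget.

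The main obstacle, and the reason this lemma is cited rather than reproved in place, is designing gadgets that are simultaneously (i) 3-colorable on yes-instances, (ii) robust enough to propagate the forced color exclusion across every clause and variable part, and (iii) of chromatic number at least five on no-instances. The construction of \cite{1992Khanna} achieves this via a careful composition of auxiliary graphs; the soundness analysis, showing that no slippage in a hypothetical 4-coloring is possible unless the base instance is satisfiable, is the delicate part of the argument, and requires several intermediate lemmas about rigid chromatic gadgets.
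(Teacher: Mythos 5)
The paper does not actually prove this lemma; it is stated as a cited result from Khanna, Linial, and Safra, and is used only as a stepping stone toward the NP-hardness of TPL decomposition. So the relevant question is whether your outline would stand on its own, and it does not. Your reduction template is the right one --- a gap-producing reduction yielding a graph that is $3$-colorable on yes-instances and has chromatic number at least $5$ on no-instances, together with the (correct) observation that any polynomial-time $4$-coloring algorithm for $3$-colorable graphs would then decide the base problem. But the construction you sketch is internally inconsistent: if you attach a rigid \emph{$5$-chromatic} palette subgraph to $G_I$, then $\chi(G_I)\ge 5$ unconditionally, so $G_I$ is never $3$-colorable and the completeness direction collapses. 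The gadgets must themselves be $3$-colorable while still constraining every hypothetical $4$-coloring, and designing such gadgets so that the constraints compose globally into a $3$-vs-$5$ gap is precisely the entire technical content of the theorem. You explicitly defer that content to the citation, so what remains is a proof schema rather than a proof; the known arguments (Khanna--Linial--Safra via PCP-based machinery, and the later direct combinatorial proof of Guruswami and Khanna) each require substantially more than local palette gadgets glued to a 3-SAT or 3-colorability instance.

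Two smaller points. First, your NP-membership sentence certifies the wrong thing: ``coloring a $3$-colorable graph with $4$ colors'' is a promise/search problem, and the meaningful claim is NP-hardness (a polynomial algorithm would imply $\mathrm{P}=\mathrm{NP}$), not NP-completeness of a decision problem --- the decision version ``is this $3$-colorable graph $4$-colorable'' is trivially always yes. Second, be careful in the soundness direction that the algorithm receives an input violating the promise on no-instances; you handle this correctly by verifying the output rather than trusting the algorithm, which is the right move and worth keeping explicit.
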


3-coloring problem is to assign the nodes in one 3-colorable graph to 3 colors.
Since coloring the graph with 4 colors cannot be finished in polynomial time, it can be shown 3-coloring problem is NP-hard.
Based on above lemmas, even the decomposition graph is planar, we reach the following theorem:

\begin{theorem}
\textbf{TPL layout decomposition problem is NP-hard.}
\end{theorem}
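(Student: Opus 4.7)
The plan is to establish NP-hardness by polynomial-time reduction from the 3-colorability problem on planar graphs, which is NP-complete by Lemma~\ref{lem:dp3}. I would target the natural decision version of TPL layout decomposition: given a layout, does there exist a mask assignment with zero conflicts and zero stitches? Any polynomial-time optimizer for conflict and stitch minimization would answer this decision question immediately, so NP-hardness of the decision version suffices.

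Given an arbitrary planar graph $G=(V,E)$, I would construct in polynomial time a layout $L$ whose layout graph is isomorphic to $G$. Starting from a planar straight-line embedding of $G$, replace each vertex by a small rectangular feature placed at the corresponding grid point, and realize each edge $(u,v)\in E$ by positioning the two corresponding features within the minimum coloring distance $min_s$ of one another, while placing all non-adjacent pairs of features at distance strictly greater than $min_s$. A sufficiently coarse grid together with short ``proximity bridge'' gadgets at each edge guarantees that the set of pairs within $min_s$ is exactly $E$. Because every feature in $L$ is a single rectangle with no legal splitting location, the node projection yields $SE=\emptyset$, and the decomposition graph coincides with $G$.

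Under this reduction, a mask assignment of $L$ with zero conflicts and zero stitches is precisely a proper 3-coloring of $G$, and conversely any proper 3-coloring of $G$ yields such an assignment. Hence a polynomial-time algorithm that minimizes conflicts and stitches on $L$ would decide planar 3-colorability in polynomial time, contradicting Lemma~\ref{lem:dp3} unless $\mathrm{P}=\mathrm{NP}$. This proves that TPL layout decomposition is NP-hard; as the excerpt remarks, the conclusion only strengthens when the restriction to planar decomposition graphs is dropped, matching the extension of Lemma~\ref{lem:dp3} to general graphs.

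The main obstacle is the geometric realization step in the second paragraph: turning an abstract planar graph into a physical layout whose within-$min_s$ pairs are exactly the edges of $G$. A short explicit gadget argument is needed to rule out spurious proximities between non-adjacent features, but once a coarse enough grid is chosen and each edge is encoded by a local bridge between otherwise well-separated rectangles, the remaining reasoning is routine. An alternative, slightly cleaner route would be to invoke Lemma~\ref{lem:assign4color} and reduce from coloring a 3-colorable graph with 4 colors, but the planar 3-colorability reduction above is both more direct and better matches the ``even the decomposition graph is planar'' remark preceding the theorem.
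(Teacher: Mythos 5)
Your approach is the same as the paper's: the paper's entire ``proof'' is the one-line remark that the theorem follows by reducing 3-Coloring to TPL decomposition, with all details omitted, and your proposal is exactly that reduction (from planar 3-colorability, via Lemma~\ref{lem:dp3}) with the details partially filled in. So there is no divergence in strategy to report; the only question is whether your filled-in details hold up.

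The one step that could genuinely fail is the edge-realization step, and your own phrasing exposes the danger: a ``proximity bridge gadget'' that inserts additional polygonal features between two vertex features does \emph{not} realize an edge of $G$ in the layout graph --- it realizes a path through new nodes, and a path does not force its endpoints to receive different colors in a 3-coloring, so the reduction would break. You must realize each edge $(u,v)$ by \emph{direct} proximity of the two polygons for $u$ and $v$, with no intermediate features. This is achievable because the layout consists of arbitrary (irregular) polygons: take a straight-line grid embedding of the planar graph of polynomial area, let each vertex polygon grow a thin arm along the first half of each incident edge, and scale the embedding by a polynomial factor so that two arms come within $min_s$ exactly at the midpoint of a shared edge and nowhere else (the minimum angle at a grid vertex is bounded below by $1/\mathrm{poly}(n)$, so a polynomial scaling suffices to separate non-incident arms). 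With that substitution your argument is sound; the remaining points (zero-conflict/zero-stitch assignments correspond exactly to proper 3-colorings, and stitch candidates, if any arise from projection, can simply be left unused in both directions of the equivalence) are routine as you say.
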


We can prove this theorem by reducing 3-Coloring problem to the TPL decomposition problem.
Due to page limit, the detailed proof is skipped here.

\section{Basic Algorithm}
\label{chap:basic}

In this section, we will present our basic algorithms, which are based on the Integer Linear Programming (ILP).
Since the timing complexity for ILP is very high, we propose three acceleration techniques to divide the whole problem into several smaller ones. 
The entire flow is shown in Fig. \ref{fig:overview}.

\begin{table}[bt]
\renewcommand{\arraystretch}{1.3}
\centering 
\caption{Notation}
\label{table:ilp}
\begin{tabular}{|c|c|}
	\hline \hline
	\multicolumn{2}{|c|}{Notation used in Mathematical programming}\\
	\hline
	$CE$		& set of conflict edges\\
	\hline
	$SE$		& set of stitch edges.\\
	\hline
	$V$		& the set of polygens.\\
	\hline
	$r_i$		& the $i_{th}$ layout polygons\\
	\hline
	$x_i$		& variable denoting the coloring of $r_i$\\
	\hline
	$c_{ij}$	& 0-1 variable, $c_{ij}=1$ when a conflict between $r_i$ and $r_j$\\
	\hline
	$s_{ij}$	& 0-1 variable, $s_{ij}=1$ when a stitch between $r_i$ and $r_j$\\
	\hline
	\multicolumn{2}{|c|}{Notation used in ILP formulation}\\
	\hline
	$x_{i1}, x_{i2}$ & two 1-bit 0-1 variables to represents 3 colors\\
	\hline
	$c_{ij1}, c_{ij2}$ & two 1-bit 0-1 variables to determine $c_{ij}$\\
	\hline
	$s_{ij1}, s_{ij2}$ & two 1-bit 0-1 variables to determine $s_{ij}$\\
	\hline\hline
\end{tabular}
\end{table}

\subsection{Mathematical Formulation for TPL Decomposition}

The mathematical formulation for TPL layout decomposition is shown in (\ref{eq:math}).
For convenience, some notations in mathematical programming and ILP formulation are listed in Table \ref{table:ilp}.
The objective is to simultaneously minimize both the conflict number and the stitch number.
The parameter $\alpha$ is a user-defined parameter for assigning relative importance between the conflict number and the stitch number.

\begin{align}
    \label{eq:math}
    \textrm{min}  	&     \ \ \ \  \sum_{e_{ij} \in CE} c_{ij} + \alpha \sum_{e_{ij} \in SE}s_{ij}& \\
    \textrm{s.t}.\ \ 	& c_{ij} = ( x_i == x_j )	&\forall e_{ij} \in CE		\label{1a}\tag{$1a$}\\
            		& s_{ij} = x_i \oplus x_j	&\forall e_{ij} \in SE		\label{1b}\tag{$1b$}\\
            		& x_i \in \{0, 1, 2\}		& \forall i \in V		\label{1c}\tag{$1c$}
\end{align}
where $x_i$ is a variable for the three colors of rectangles $r_i$, $c_{ij}$ is a binary variable for conflict edge $e_{ij} \in CE$ and $s_{ij}$ is a binary variable for  stitch edge $e_{ij} \in SE$. 
Constraint (\ref{1a}) is used to evaluate the conflict number when touch nodes $r_i$ and $r_j$ are assigned different colors (masks). Constraint (\ref{1b}) is used to calculate the stitch number. If node $r_i$ and node $r_j$ are assigned the same color (mask), stitch $s_{ij}$ is introduced.

\subsection{ILP Formulation for TPL Layout Decomposition}

We will now show how to formulate (\ref{eq:math}) with Integer Linear Programming.
Note that  eqs. ($1a$) and ($1b$) can be linearized only when $x_i$ is a 0-1 variable \cite{DPL_ICCAD08_Kahng}, which cannot represent three different colors. To handle this problem, we represent the color of each node using two 1-bit 0-1 variables $x_{i1}$ and $x_{i2}$.
In order to limit the number of colors for each node to 3, for each pair $(x_{i1}, x_{i2})$ the value $(1, 1)$ is not permitted.
In other words, only values $(0, 0), (0, 1)$ and $(1, 0)$ are allowed.

Thus, (\ref{eq:math}) can be formulated as follows:
\begin{align}
	\textrm{min}  &       \sum_{e_{ij} \in CE} c_{ij} + \alpha \sum_{e_{ij} \in SE}s_{ij}& 		\label{eq:ILP}\\
	\textrm{s.t}.\ \	&	x_{i1} + x_{i2} \le 1										\label{2a}\tag{$2a$}\\
				&	x_{i1} + x_{j1} \le 1 + c_{ij1}                 		&\forall e_{ij} \in CE  	\label{2b}\tag{$2b$}\\
				&	(1-x_{i1})+(1-x_{j1}) \le 1 + c_{ij1}		&\forall e_{ij} \in CE   	\label{2c}\tag{$2c$}\\
				&	x_{i2} + x_{j2} \le 1 + c_{ij2}			&\forall e_{ij} \in CE  	\label{2d}\tag{$2d$}\\
				&	(1-x_{i2})+(1-x_{j2}) \le 1 + c_{ij2}		&\forall e_{ij} \in CE   	\label{2e}\tag{$2e$}\\
				&	c_{ij1} + c_{ij2} \le 1 + c_{ij} 			&\forall e_{ij} \in CE		\label{2f}\tag{$2f$}\\
				&	x_{i1} - x_{j1} \le s_{ij1}            			&\forall e_{ij} \in SE  		\label{2g}\tag{$2g$}\\
				&	x_{j1} - x_{i1} \le s_{ij1}				&\forall e_{ij} \in SE  		\label{2h}\tag{$2h$}\\
				&	x_{i2} - x_{j2} \le s_{ij2}            			&\forall e_{ij} \in SE  		\label{2i}	\tag{$2i$}\\
				&	x_{j2} - x_{i2} \le s_{ij2}				&\forall e_{ij} \in SE  		\label{2j}	\tag{$2j$}\\
				&	s_{ij} \ge s_{ij1}, s_{ij} \ge s_{ij2}			&\forall e_{ij} \in SE		\label{2k}\tag{$2k$}
\end{align}

The objective function is the same as that in (\ref{eq:math}), which minimizes the weighted summation of the conflict number and the stitch number.
Constraint (\ref{2a}) is used to limit the number of colors for each node to 3.

Constraints (\ref{2b}) to (\ref{2f}) are equivalent to constraint (\ref{1a}), where 0-1 variable $c_{ij1}$ demonstrates whether $x_{i1}$ equals to $x_{j1}$, and $c_{ij2}$ demonstrates whether $x_{i2}$ equals to $x_{j2}$.
0-1 variable $c_{ij}$ is true only if two nodes connected by conflict edge $e_{ij}$ are in the same color, e.g. both $c_{ij1}$ and $c_{ij2}$ are true.

Similarly, constraints (\ref{2g}) to (\ref{2k}) are equivalent to constraint (\ref{1b}). 0-1 variable $s_{ij1}$ demonstrates whether $x_{i1}$ is different from $x_{j1}$, and $s_{ij2}$ demonstrates whether $x_{i2}$ is different from $x_{j2}$.
Stitch $s_{ij}$ is true if either $s_{ij1}$ or $s_{ij2}$ is true.

\begin{figure}[tb]
	\centering
	\includegraphics[width=0.5\textwidth]{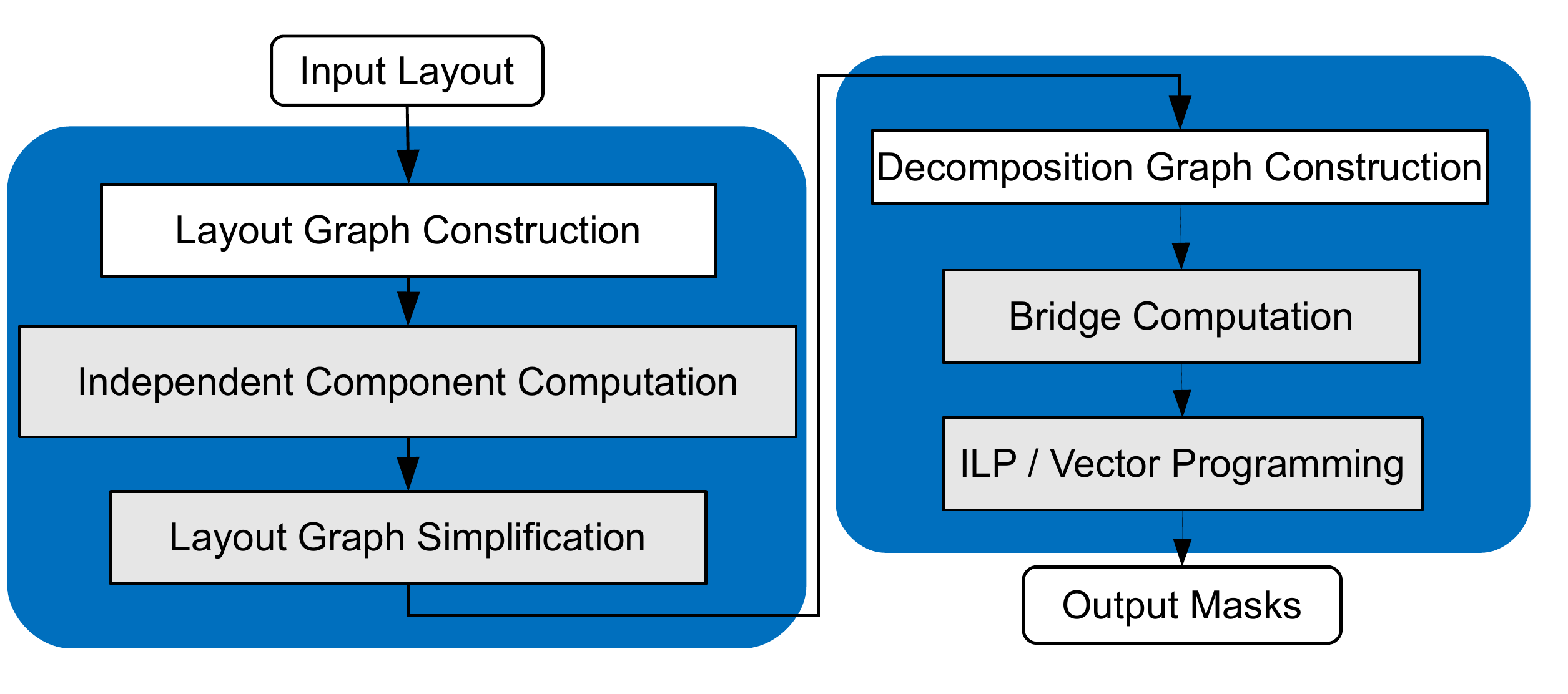}
	\caption{Basic Algorithms Flow}
	\label{fig:overview}
\end{figure}

\subsection{Acceleration Techniques}

\begin{figure*}[t]
	\centering
	\subfigure[]{\includegraphics[width=0.18\textwidth]{input}}
	\subfigure[]{\includegraphics[width=0.18\textwidth]{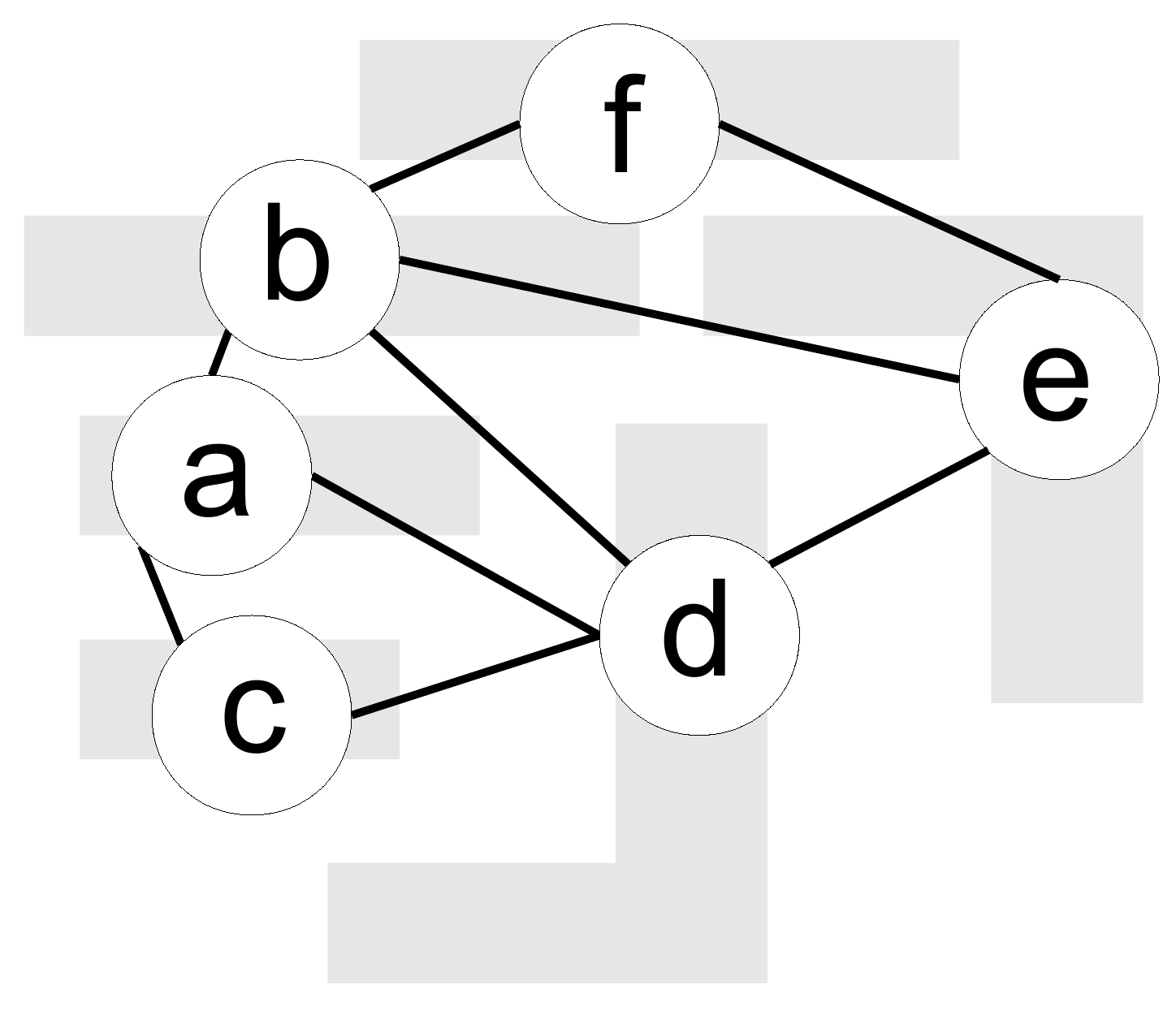}}
	\subfigure[]{\includegraphics[width=0.18\textwidth]{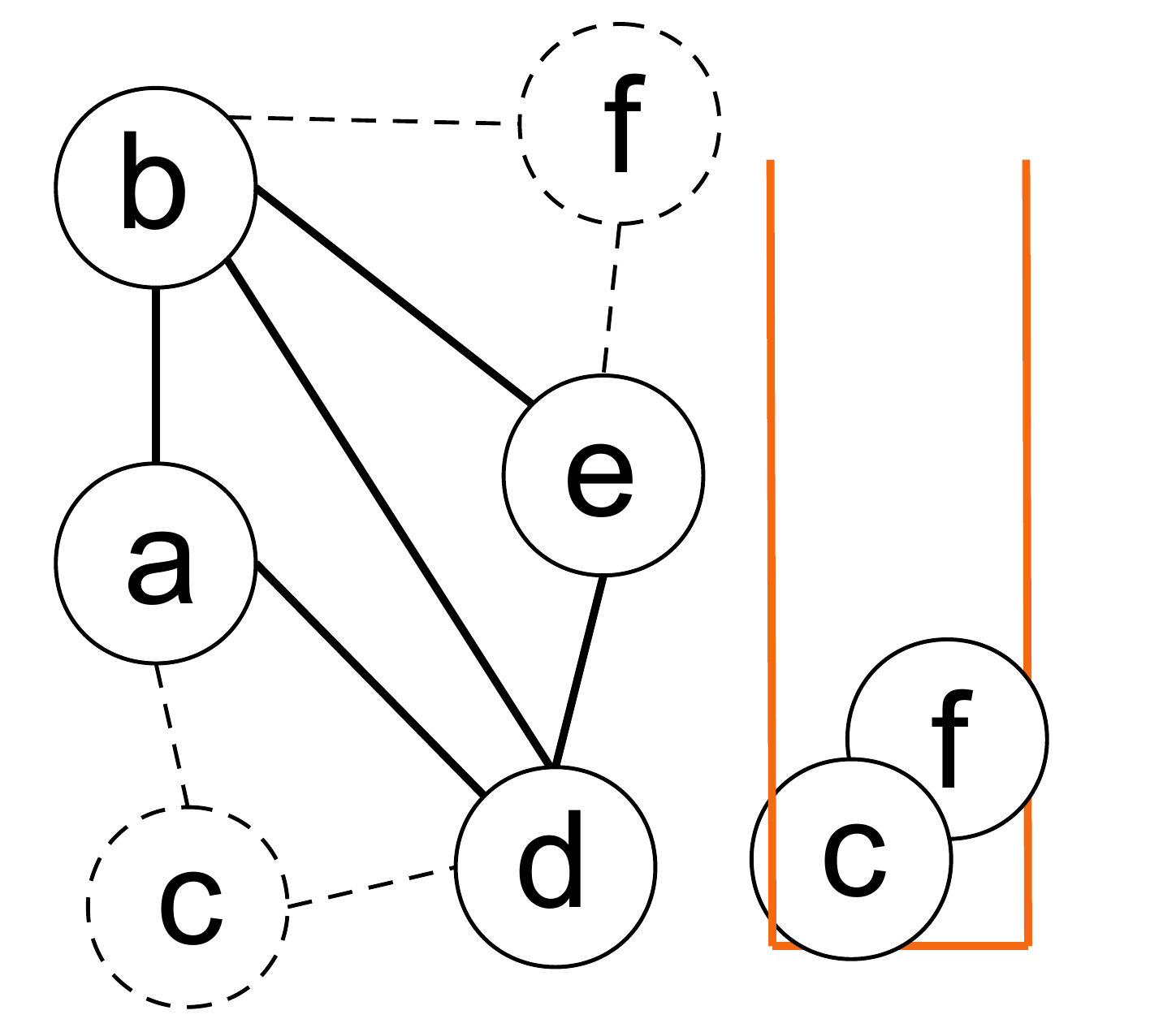}}
	\subfigure[]{\includegraphics[width=0.18\textwidth]{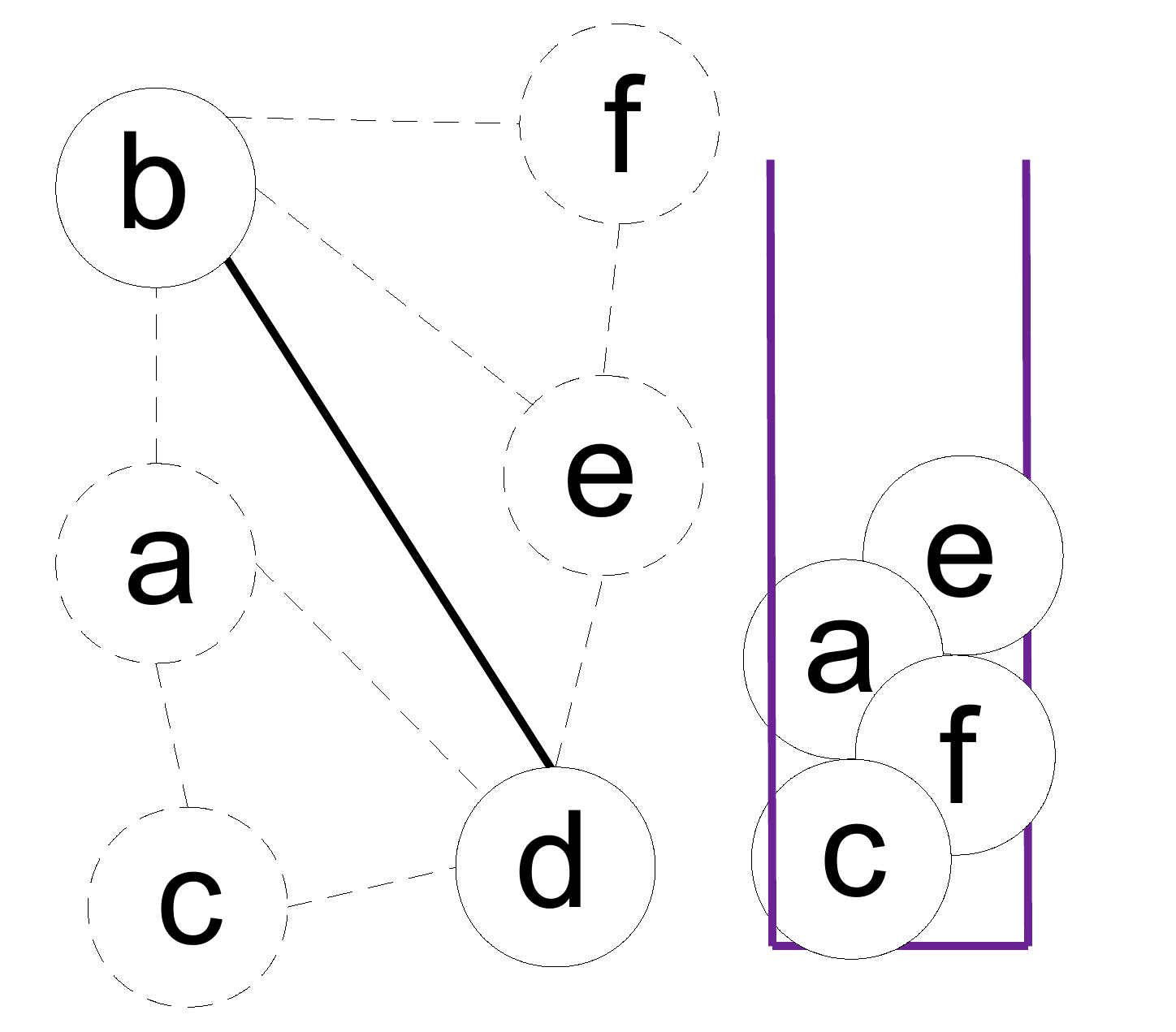}}
	\subfigure[]{\includegraphics[width=0.18\textwidth]{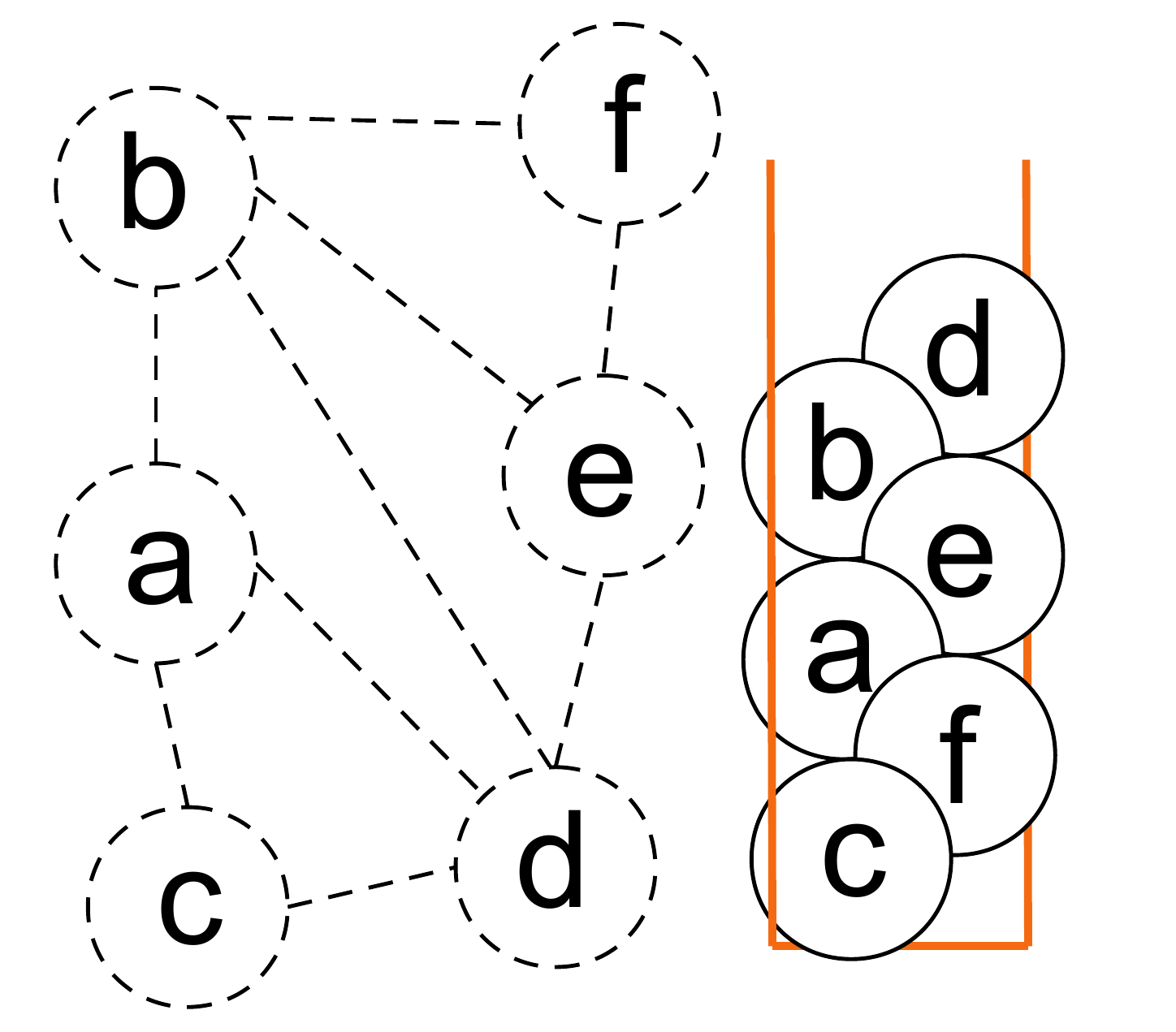}}
	\hspace{.1em}
	\subfigure[]{\includegraphics[width=0.18\textwidth]{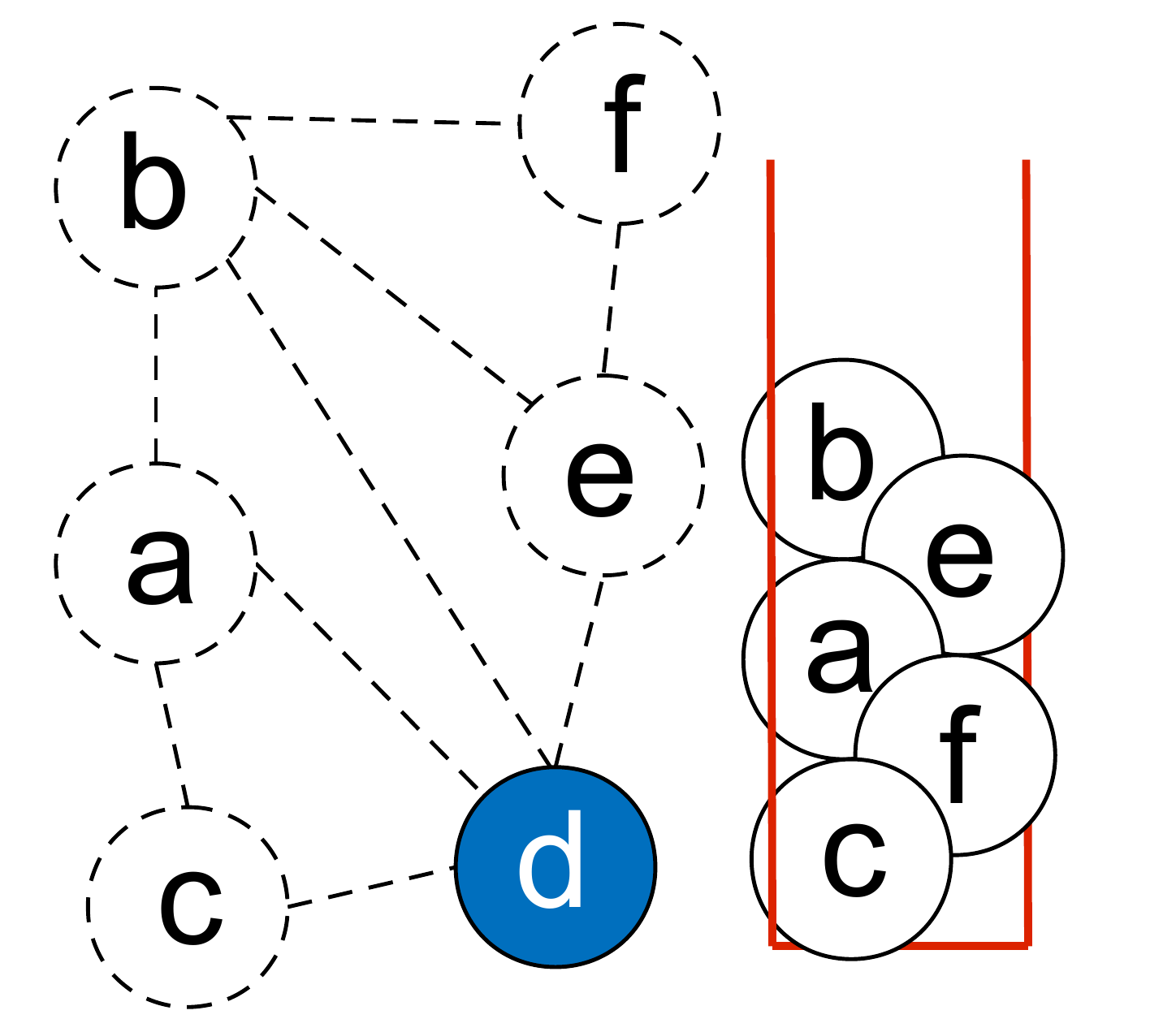}}
	\subfigure[]{\includegraphics[width=0.18\textwidth]{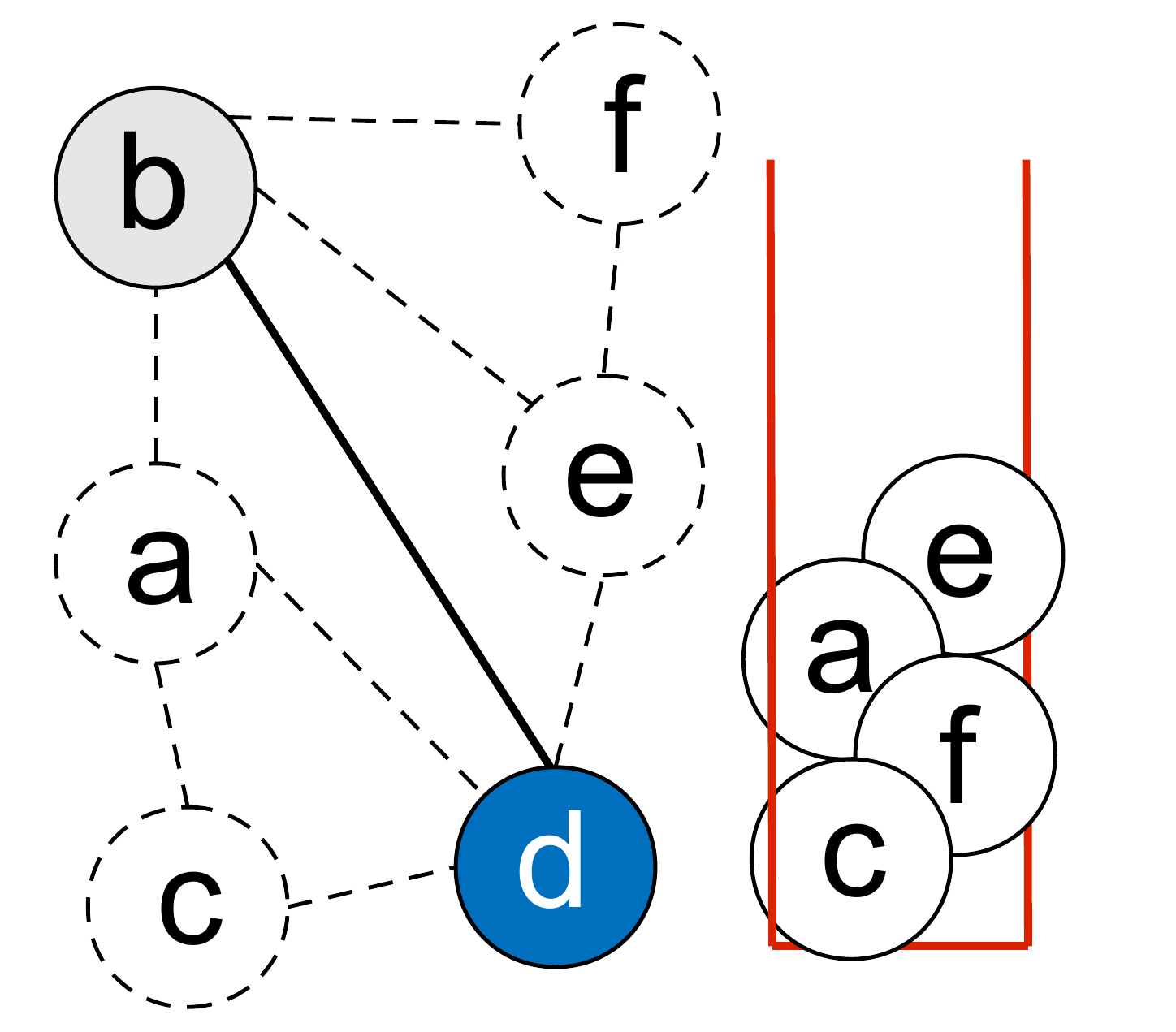}}
	\subfigure[]{\includegraphics[width=0.18\textwidth]{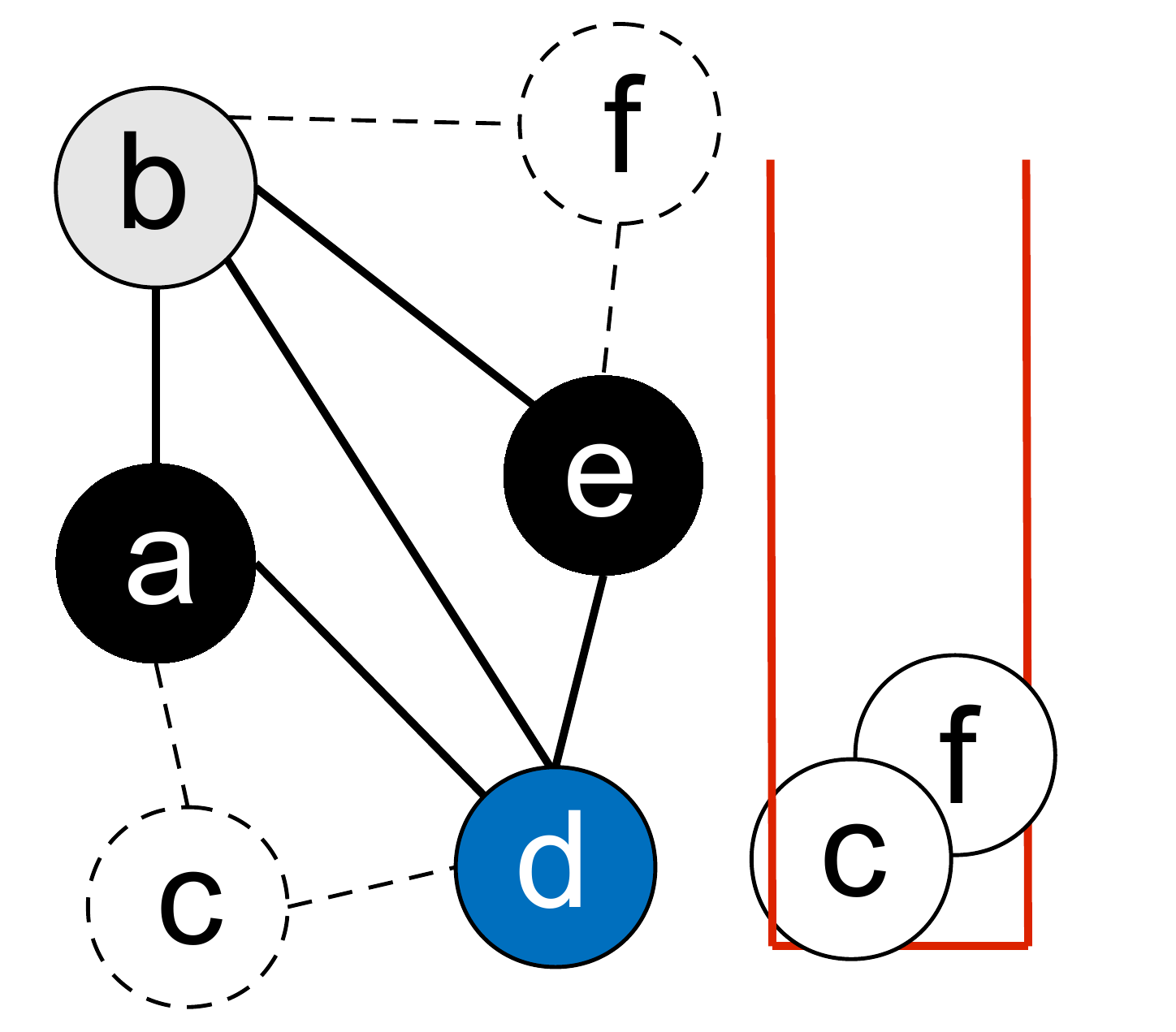}}
	\subfigure[]{\includegraphics[width=0.18\textwidth]{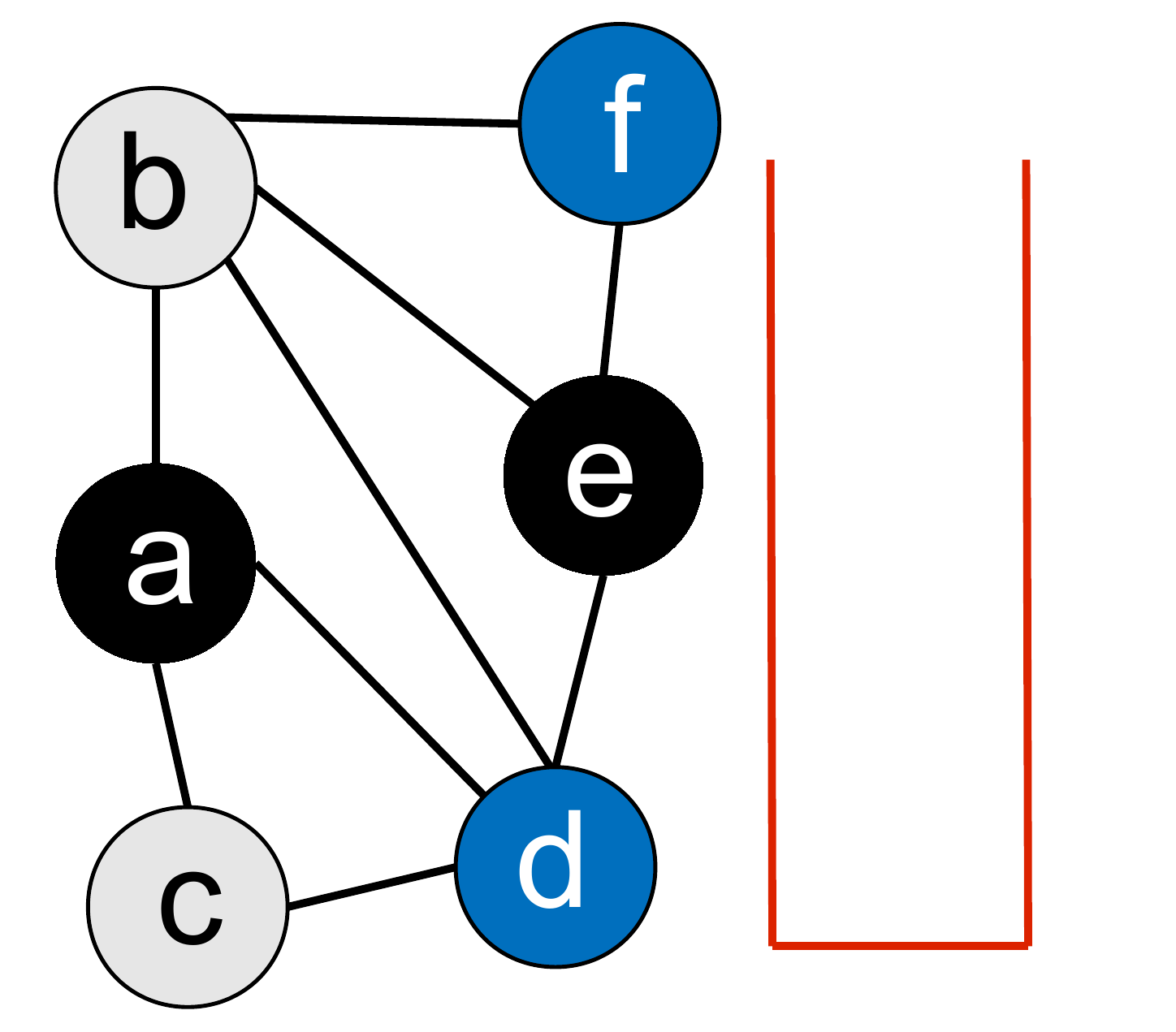}}
	\subfigure[]{\includegraphics[width=0.18\textwidth]{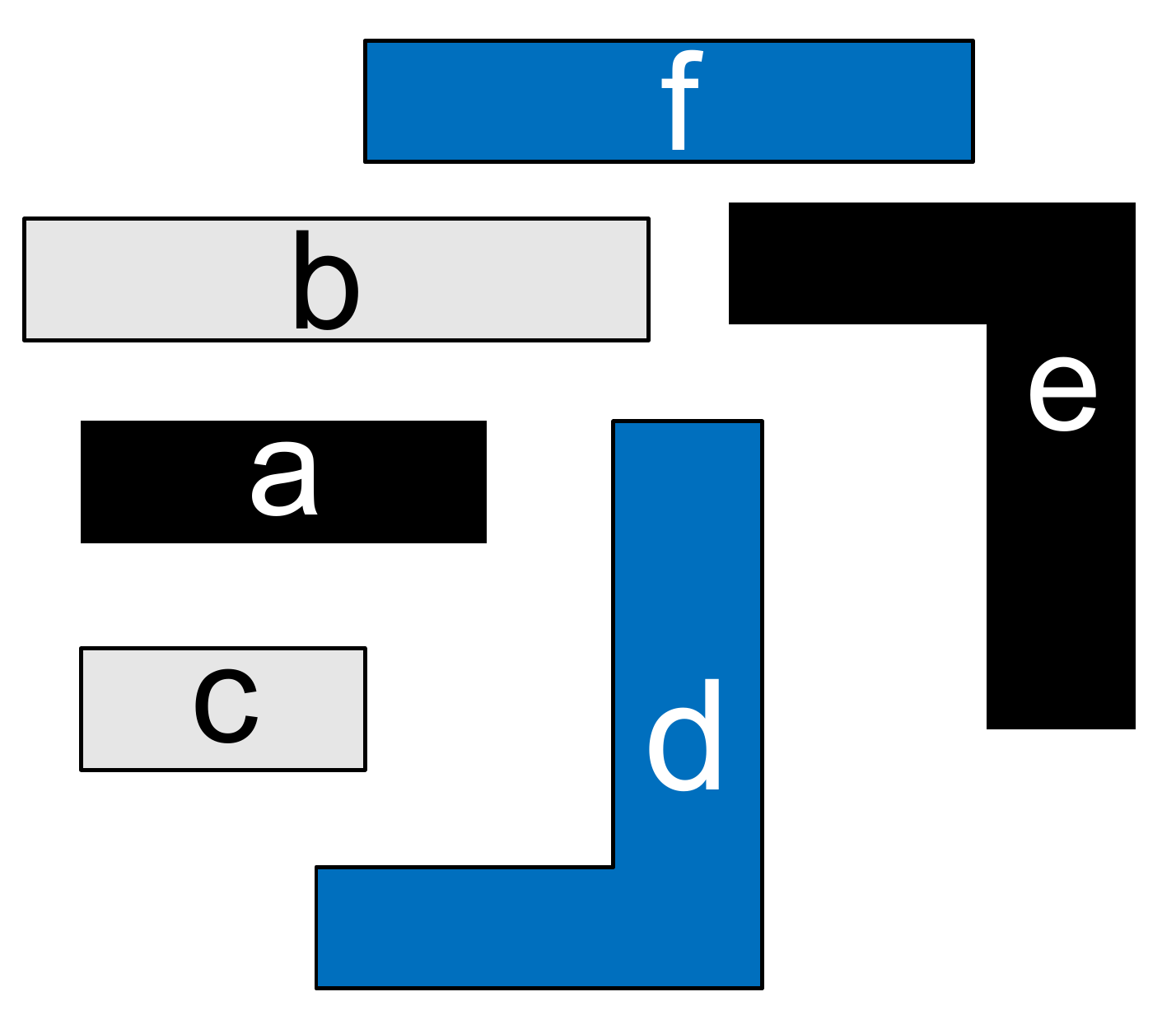}}
	\caption{This layout can be directly decomposed by layout graph simplification.
	(a) Input layout.~(b) Corresponding layout graph.
	(c)(d)(e) Iteratively remove and push in nodes with edges no more than 2.
	(f)(g)(h)(i) Iteratively pop up and recover node, and assign any legal color.~(j) Final decomposition result.
	}
	\label{fig:Simplify}
\end{figure*}

Since ILP is an NP-hard problem, its runtime increases dramatically with the size of a decomposition graph.
We propose three acceleration techniques to simplify the layout graph and the decomposition graph in order to reduce the time complexity of ILP.
As shown in Fig.\ref{fig:overview}, our acceleration flow consists of three steps: Independent Component Computation, Layout Graph Simplification and Bridges Computation.

\subsubsection{Independent Component Computation}

We propose independent component computation on the decomposition graph to reduce the ILP problem size without losing optimality.
In a layout graph of real design, we observe many isolated clusters.
Therefore, we can break down the whole design into several independent components, and apply basic ILP formulation for each one.
The overall solution can be taken as the union of all the components without affecting the global optimality.
The runtime of ILP formulation decreases dramatically with the reduction of variables and constraints, and the coloring assignment can be effectively accelerated.
Independent component computation is a well-known technique which has been applied in many previous studies \cite{DPL_ICCAD08_Kahng}\cite{DPL_ISPD09_Yuan}\cite{DPL_ASPDAC2010_Yang}\cite{DPL_ICCAD2010_Yuan}.

\subsubsection{Layout Graph Simplification}
\label{sec:Simplification}

\begin{algorithm}[htb]
\caption{Layout Graph Simplification and Color Assignment} 
\label{alg:Simplify}
\begin{algorithmic}[1]
   \REQUIRE Layout Graph $G$ to be simplified, stack $S$
   \WHILE{$\exists n \in G$ s.t. $degree(n) \le 2$}
         \STATE $S.push(n)$;
         \STATE $G.delete(n)$;
   \ENDWHILE
   \STATE Decomposition graph construction.
   \STATE TPL layout decomposition for nodes not be simplified.
   \WHILE{ $! S.empty()$}
   	\STATE $n = S.pop()$;
	\STATE $G.add(n)$;
	\STATE Assign $n$ a legal color.
   \ENDWHILE
\end{algorithmic}
\end{algorithm}

We can simplify the layout graph by removing all nodes with degree less than or equal to two.
At the beginning, all nodes with degree less than or equal to two are detected and removed temporarily from the layout graph.
This removing process will continue until all the nodes are at least degree-three.
The layout graph simplification algorithm is shown in Algorithm \ref{alg:Simplify}.

If all the nodes in the layout graph can be pushed onto the stack, Algorithm \ref{alg:Simplify} can solve TPL layout decomposition optimally in linear time. 
As an example shown in Fig.\ref{fig:Simplify}, every node can be pushed onto stack and finally be colored when is popped off.
Note that even when some nodes cannot be simplified, layout graph simplification can reduce problem size dramatically. 
Additionally, we observe that this algorithm can also partition the layout graph into several sub-graphs.


\subsubsection{Bridges Computation}

\begin{figure}[tb]
	\centering
	\subfigure[]{
		\includegraphics[width=0.12\textwidth]{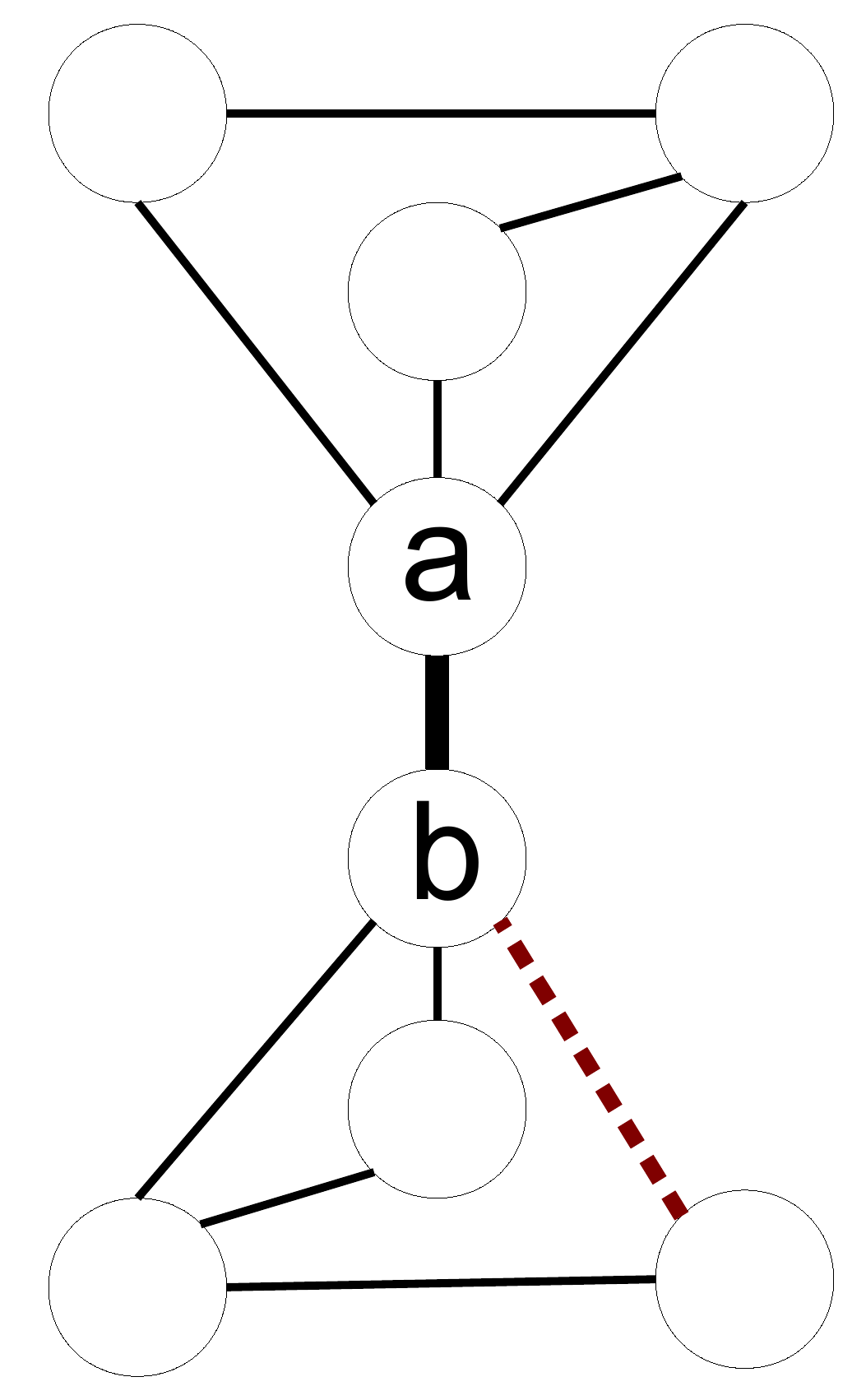}
	}
	\vspace{0.2em}
	\subfigure[]{
		\includegraphics[width=0.12\textwidth]{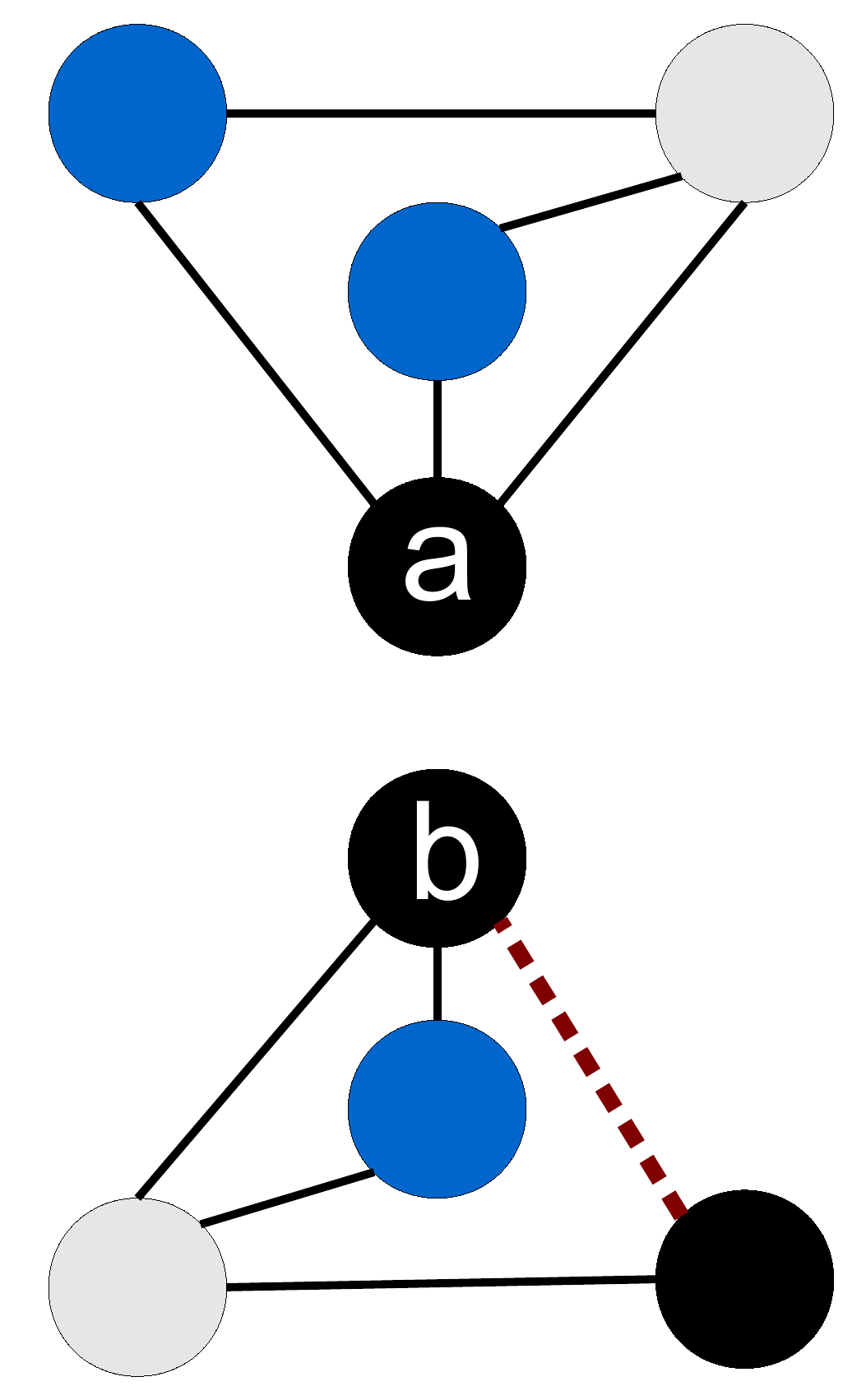}
	}
	\vspace{0.2em}
	\subfigure[]{
		\includegraphics[width=0.12\textwidth]{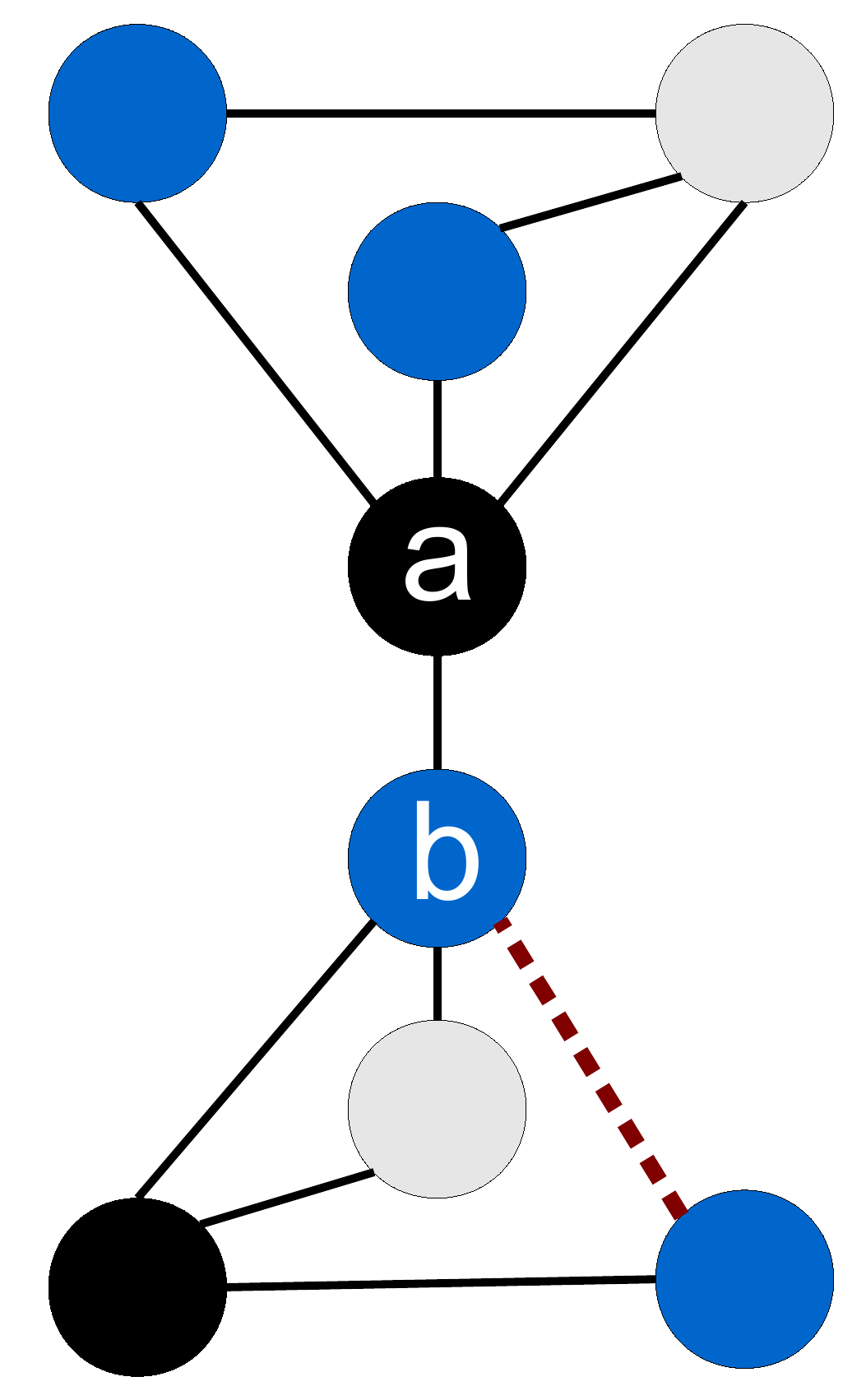}
	}
	\caption{Bridges Computation.~(a) After bridges computation, label edge $e_{ab}$ as bridge.~(b) In two sub-graphs carry out ILP formulation.~(c) Rotate colors in one sub-graph to add bridge.
	}
	\label{fig:bridge}
\end{figure}

A bridge of a graph is an edge whose removal disconnects the graph into two components.
If the two components are independent, removing the bridge can divide the whole ILP into two independent ILP formulations.

\begin{theorem}
\textbf{Partitioning decomposition graph by removing bridges does not introduce new stitches.}
\end{theorem}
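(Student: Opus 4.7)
The plan is to show that after decomposing the graph along a bridge $e_{ab}$, solving each side independently, and then gluing the solutions back together, the total stitch count is no more than the sum of the stitch counts produced by the two subproblems. The whole argument hinges on the fact that the set of three colors admits a nontrivial symmetry group $S_3$ that preserves the ``equal vs.\ unequal'' relation on endpoints of every edge.

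First I would fix some notation. Let $G_1$ and $G_2$ be the two components obtained by deleting the bridge $e_{ab}$, with $a \in G_1$ and $b \in G_2$, and let $\phi_1$, $\phi_2$ be the colorings returned by ILP on each side, incurring $s_1$ and $s_2$ stitches internally. Observe the central lemma: for any permutation $\pi$ of the three colors, replacing $\phi_2$ by $\pi \circ \phi_2$ preserves, for every edge $e_{ij}$ inside $G_2$, the truth value of $\phi_2(i) = \phi_2(j)$. Hence every internal conflict count $c_{ij}$ and every internal stitch count $s_{ij}$ is invariant under relabeling one side by $\pi$. In particular the stitches contributed \emph{inside} each component after gluing are exactly $s_1 + s_2$, regardless of which $\pi$ we pick.

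Next I would split on the type of the bridge. If $e_{ab} \in CE$, then to avoid a new conflict on the bridge we need $\pi(\phi_2(b)) \neq \phi_1(a)$; two of the three values of $\pi(\phi_2(b))$ satisfy this, so a valid $\pi$ exists, and no stitch edge is involved at the bridge, so no new stitch can possibly arise. If instead $e_{ab} \in SE$, then $a$ and $b$ are associated with the same polygon, and the bridge itself potentially contributes a stitch exactly when $\phi_1(a) \neq \pi(\phi_2(b))$; choosing $\pi$ so that $\pi(\phi_2(b)) = \phi_1(a)$ (which is always possible, since two of the six permutations achieve this) makes the bridge not fire as a stitch. Either way the final configuration has at most $s_1 + s_2$ stitches, so partitioning across the bridge has introduced none.

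The step I expect to require the most care is the stitch-bridge case, since here one must verify that setting $\pi(\phi_2(b)) = \phi_1(a)$ does not sneak in extra stitches elsewhere; that is precisely what the invariance lemma of the previous paragraph rules out, so the proof reduces to citing that lemma plus a counting argument on $|S_3| = 6$. Everything else is bookkeeping, and the conclusion that the bridge partitioning incurs no stitch overhead follows immediately.
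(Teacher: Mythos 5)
Your proposal is correct and takes essentially the same approach as the paper, which justifies the theorem only by the brief remark that one can ``rotate colors of all nodes in one sub-graph'' after solving the two sides of the bridge independently; your color-permutation invariance lemma and the case split on $e_{ab} \in CE$ versus $e_{ab} \in SE$ are a fleshed-out version of exactly that argument.
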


An example of the bridges computation is shown in Fig. \ref{fig:bridge}.
First of all, conflict edge $e_{ab}$ is found to be a bridge. Removing the bridge divides the decomposition graph into two sides.
After ILP based color assignment, if node $a$ and node $b$ are assigned the same color,  we can rotate colors of all nodes in one sub-graph. Similar method can be adopted when bridge is a stitch edge.
We adopt an $O(|V|+|E|)$ algorithm \cite{1974Tarjan} to find bridges in decomposition graph.

Using above three acceleration techniques, the ILP formulation can still achieve optimal solutions.
In other words, our acceleration algorithms can keep optimality.
Due to page limit, we skip the detailed discussion here.

\section{SDP Based Algorithm}
\label{chap:SDP}

Although  the accelerated algorithms can simplify the problem size in many ways, ILP may still be too slow for large problems which cannot be simplified effectively.
In this section we provide an approximation algorithm to obtain more rapid solutions.
First, a novel vector programming for TPL layout decomposition is formulated.
Then we relax the vector programming into Semidefinite Programming (SDP).
Given this solution from Semidefinite Programming, we can obtain the TPL decomposition results in polynomial time.



\subsection{Vector Programming for TPL Layout Decomposition}

\begin{figure}[tb]
	\centering
	\includegraphics[width=0.23\textwidth]{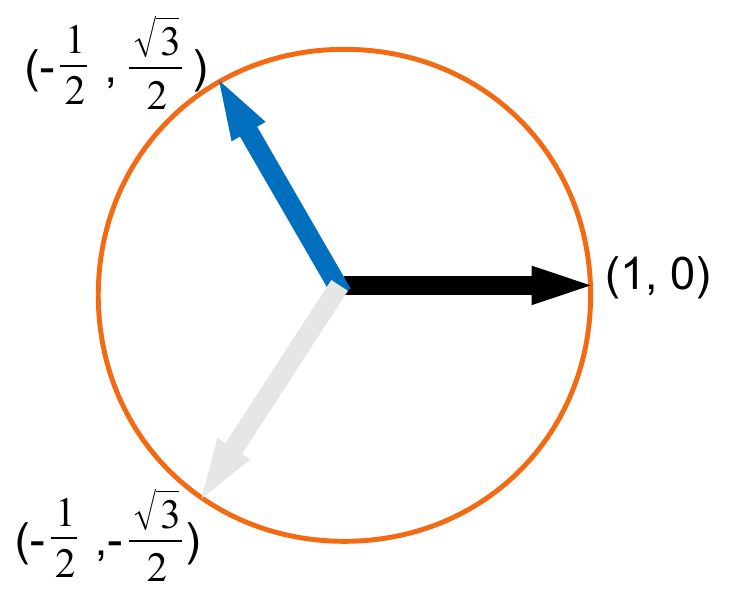}
	\caption{~Three vectors $(1, 0), (-\frac{1}{2}, \frac{\sqrt{3}}{2}), (-\frac{1}{2}, -\frac{\sqrt{3}}{2})$ represent three different colors.}
	\label{fig:vector}
\end{figure}

In TPL decomposition, there are three possible colors.
We set a unit vector $\vec{v_i}$ for every node $i$.
If $e_{ij}$ is a conflict edge, we want nodes $\vec{v_i}$ and $\vec{v_j}$ to be far apart.
If $e_{ij}$ is a stitch edge, we hope for nodes $\vec{v_i}$ and $\vec{v_j}$ to be the same.
As shown in Fig. \ref{fig:vector}, we associate all the nodes with three different unit vectors: $(1, 0), (-\frac{1}{2}, \frac{\sqrt{3}}{2})$ and $(-\frac{1}{2}, -\frac{\sqrt{3}}{2})$.
Note that the angle between any two vectors of the same color is $0$, while the angle between vectors with different colors is $2\pi/3$.

Additionally, we define the inner product of two $m$-dimension vectors $\vec{v_i}$ and $\vec{v_j}$ as follows:
\begin{displaymath}
	\vec{v_i} \cdot \vec{v_j} = \sum_{k=1}^{m} v_{ik}v_{jk}
\end{displaymath}
where each vector $\vec{v_i}$ can be represented as $(v_{i1}, v_{i2}, \dots v_{im})$.
Then for the vectors $\vec{v_i}, \vec{v_j} \in \{(1, 0), (-\frac{1}{2}, \frac{\sqrt{3}}{2}), (-\frac{1}{2}, -\frac{\sqrt{3}}{2})\}$, we have the following  property:
\begin{displaymath}
	\vec{v_i} \cdot \vec{v_j} = 
	\left\{
	\begin{array}{cc}
		1, 			& \vec{v_i} = \vec{v_j}\\
		-\frac{1}{2} 	& \vec{v_i} \ne \vec{v_j}
	\end{array}
	\right.
\end{displaymath}


Based on the above property, we can formulate the TPL layout decomposition as the following vector program \cite{approx_book2011}:

\begin{align}
    \label{eq:vp}
    \textrm{min}  & \sum_{e_{ij} \in CE} \frac{2}{3} ( \vec{v_i} \cdot \vec{v_j} + \frac{1}{2} ) + \frac{2\alpha}{3} \sum_{e_{ij} \in SE} ( 1 - \vec{v_i} \cdot \vec{v_j} ) \\
    \textrm{s.t}.\ \	
    				& \vec{v_i} \in \{(1, 0), (-\frac{1}{2}, \frac{\sqrt{3}}{2}), (-\frac{1}{2}, -\frac{\sqrt{3}}{2})\} 	 \label{3a}\tag{$3a$}
\end{align}

Formula (\ref{eq:vp}) is equivalent to mathematical formula (\ref{eq:math}).
Since the TPL decomposition is NP-hard, this vector programming is also NP-hard.
In the next part, we will relax (\ref{eq:vp}) to semidefinite programming, which can be solved in polynomial time.

\subsection{Semidefinite Programming Approximation}

Constraint (\ref{3a}) requires solutions of (\ref{eq:vp}) be discrete.
After removing this constraint, we generate formula (\ref{eq:vp2}) as follows:

\begin{align}
    \label{eq:vp2}
    \textrm{min}  & \sum_{e_{ij} \in CE} \frac{2}{3} ( \vec{y_i} \cdot \vec{y_j} + \frac{1}{2} ) + \frac{2\alpha}{3} \sum_{e_{ij} \in SE} ( 1 - \vec{y_i} \cdot \vec{y_j} ) \\
    \textrm{s.t}.\ \	& \vec{y_i} \cdot \vec{y_i} = 1 , \ \ \                   		\forall i \in V			&\label{vp2:a}\tag{$4a$}\\
    				& \vec{y_i} \cdot \vec{y_j} \ge -\frac{1}{2},   \ \ \ \forall e_{ij} \in CE 			& \label{vp2:b}\tag{$4b$}
\end{align}
This formula is a relaxation of (\ref{eq:vp}) since we can take any feasible solution $\vec{v_i} = (v_{i1}, v_{i2})$ to produce a feasible solution of (\ref{eq:vp2}) by setting $\vec{y_i} = (v_{i1}, v_{i2}, 0, 0, \cdots, 0)$, i.e. $\vec{y_i} \cdot \vec{y_j} = 1$ and $\vec{y_i} \cdot \vec{y_j} = \vec{v_i} \cdot \vec{v_j}$ in this solution. Thus if $Z_{R}$ is the value of an optimal solution of formula (\ref{eq:vp2}) and $OPT$ is an optimal value of formula (\ref{eq:vp}), it must satisfy: $Z_{R} \le OPT$.
In other words, solution of (\ref{eq:vp2}) is an approximation to that in (\ref{eq:vp}).
Since we only care about the value of $\vec{y_i}$, program (\ref{eq:vp2}) can be further simplified by eliminating the constants in the objective function:
\begin{align}
    \label{eq:vp3}
    \textrm{min} 	& \sum_{e_{ij} \in CE} (\vec{y_i} \cdot \vec{y_j}) - \alpha \sum_{e_{ij} \in SE} (\vec{y_i} \cdot \vec{y_j}) &\\
    \textrm{s.t}.	\ \	& (4a) - (4b)		&\notag
\end{align}

Without discrete constraint (\ref{3a}), programs (\ref{eq:vp2}) and (\ref{eq:vp3}) are not NP-hard now.
To solve (\ref{eq:vp3}) in polynomial time, we will show that it is equivalent to a semidefinite programming.
Semidefinite programming (SDP) is similar to linear programming which has a linear objective function and linear constraints.
However, a square symmetric matrix of variables can be constrained to be positive semidefinite.
Although semidefinite programs are more general than linear programs, both of them can be solved in polynomial time.
Besides, the relaxation based on the semidefinite programming has better theoretical results than those based on LP \cite{SDP_1996Boyed}.


Consider the following standard semidefinite program:
\begin{align}
	\label{eq:sdp}
	\textrm{SDP:\ \ min}	& \ \ \ A \bullet X						\\
				& X_{ii} = 1, \ \ \forall i \in V						\tag{$6a$}\\
				& X_{ij} \ge -\frac{1}{2}, \ \ \forall e_{ij} \in CE		\tag{$6b$}\\
				& X \succeq 0									\label{6c}\tag{$6c$}
\end{align}
where $A \bullet X$ is the inner product between two matrices $A$ and $X$, i.e. $\sum_i \sum_j A_{ij}X_{ij}$.
Here $A_{ij}$ is the entry that lies in the $i$-th row and the $j$-th column of matrix $A$.
Constraint (\ref{6c}) means matrix $X$ should be positive semidefinite.
\begin{equation}
	A_{ij} = 
	\left\{
	\begin{array}{cc}
		1, 			& \forall e_{ij} \in CE\\
		-\alpha, 		& \forall e_{ij} \in SE\\
		0,			& \textrm{otherwise}
	\end{array}
	\right.
\end{equation}

Similarly, $X_{ij}$ is the $i$-th row and the $j$-th column entry of $X$.
Note that the solution of SDP is represented as a positive semidefinite matrix $X$, while solutions of vector programming are stored in a list of vectors.
However, we can show that they are equivalent.

\begin{lemma}
\label{lem:psd}
A symmetric matrix $X$ is positive semidefinite if and only if $X = VV^T$ for some matrix $V$. 
\end{lemma}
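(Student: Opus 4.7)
The plan is to prove the two directions of this standard linear algebra equivalence separately, relying only on the spectral theorem for real symmetric matrices.

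For the easy direction, assume $X = VV^T$ for some matrix $V$. First I would verify symmetry: $X^T = (VV^T)^T = VV^T = X$. Then, to check positive semidefiniteness, I would take an arbitrary vector $x$ and compute
\begin{equation*}
    x^T X x \;=\; x^T V V^T x \;=\; (V^T x)^T (V^T x) \;=\; \lVert V^T x \rVert^2 \;\ge\; 0,
\end{equation*}
which shows $X \succeq 0$.

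For the converse, suppose $X$ is symmetric and positive semidefinite. I would invoke the spectral theorem to write $X = Q \Lambda Q^T$, where $Q$ is orthogonal and $\Lambda = \mathrm{diag}(\lambda_1,\dots,\lambda_n)$ contains the eigenvalues of $X$. Positive semidefiniteness forces each $\lambda_i \ge 0$, so the entrywise square root $\Lambda^{1/2} = \mathrm{diag}(\sqrt{\lambda_1},\dots,\sqrt{\lambda_n})$ is a well-defined real diagonal matrix. Setting $V = Q \Lambda^{1/2}$, I would then verify
\begin{equation*}
    VV^T \;=\; Q\Lambda^{1/2} (Q\Lambda^{1/2})^T \;=\; Q\Lambda^{1/2}\Lambda^{1/2}Q^T \;=\; Q\Lambda Q^T \;=\; X,
\end{equation*}
completing the proof.

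There is no real obstacle here; the only subtlety is the appeal to the spectral theorem to guarantee diagonalizability by an orthogonal matrix and the nonnegativity of the eigenvalues, both of which are textbook facts for real symmetric PSD matrices. The constructive choice $V = Q\Lambda^{1/2}$ also makes it transparent that the factor $V$ used downstream (to recover the vectors $\vec{y_i}$ from the SDP solution $X$) can be computed explicitly in polynomial time via an eigendecomposition (or, alternatively, via Cholesky factorization), which is precisely what the SDP-based rounding procedure in the sequel will need.
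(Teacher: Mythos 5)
Your proof is correct and complete. The paper does not actually prove this lemma --- it states it as a known fact and only remarks that, for a given positive semidefinite $X$, the factor $V$ with $X = VV^T$ can be computed by Cholesky decomposition in $O(n^3)$ time. Your argument supplies the missing justification in the standard way: the direction $VV^T \succeq 0$ via $x^T VV^T x = \lVert V^T x\rVert^2 \ge 0$, and the converse via the spectral theorem with $V = Q\Lambda^{1/2}$. The only stylistic difference from the paper's surrounding discussion is that you construct $V$ from the eigendecomposition rather than the Cholesky factorization; both yield a valid factor, and you correctly note that either can be computed in polynomial time, which is all the downstream SDP rounding procedure requires. No gaps.
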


Given a positive semidefinite matrix $X$, using the Cholesky decomposition we can find corresponding matrix $V$ in $O(n^3)$ time.

\begin{theorem}
\textbf{The semidefinite program (\ref{eq:sdp}) and the vector program (\ref{eq:vp3}) are equivalent.}
\end{theorem}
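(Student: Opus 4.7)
The plan is to prove equivalence by exhibiting a bijection between feasible solutions of (\ref{eq:vp3}) and (\ref{eq:sdp}) that preserves (up to an irrelevant constant factor) the objective value. The entire argument hinges on Lemma~\ref{lem:psd}: a symmetric matrix $X$ is positive semidefinite iff $X = VV^T$ for some matrix $V$, and equivalently iff $X_{ij} = \vec{y_i}\cdot\vec{y_j}$ where $\vec{y_i}$ denotes the $i$-th row of $V$.

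First I would establish the forward direction. Given a feasible vector-programming solution $\{\vec{y_i}\}_{i\in V}$, stack these vectors as rows of a matrix $V$ and define $X = VV^T$. By the lemma, $X \succeq 0$, so (\ref{6c}) is satisfied. Moreover, $X_{ii} = \vec{y_i}\cdot\vec{y_i} = 1$ by (\ref{vp2:a}), and $X_{ij} = \vec{y_i}\cdot\vec{y_j} \ge -\tfrac{1}{2}$ for every $e_{ij}\in CE$ by (\ref{vp2:b}). Hence $X$ is feasible for SDP. Expanding the SDP objective using the definition of $A$,
\begin{equation*}
A\bullet X \;=\; \sum_{i}\sum_{j} A_{ij}X_{ij} \;=\; 2\sum_{e_{ij}\in CE}\vec{y_i}\cdot\vec{y_j} \;-\; 2\alpha\sum_{e_{ij}\in SE}\vec{y_i}\cdot\vec{y_j},
\end{equation*}
which is exactly twice the objective of (\ref{eq:vp3}); thus minimizers coincide.

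Next I would prove the reverse direction. Given a feasible SDP solution $X$, apply Cholesky decomposition (justified by the lemma) to write $X = VV^T$ in $O(n^3)$ time, and let $\vec{y_i}$ be the $i$-th row of $V$. Then $\vec{y_i}\cdot\vec{y_j} = X_{ij}$ for all $i,j$, so constraint $X_{ii}=1$ translates into $\vec{y_i}\cdot\vec{y_i}=1$, and $X_{ij} \ge -\tfrac{1}{2}$ on conflict edges translates into the corresponding constraint (\ref{vp2:b}). The objective value computation is identical to the one above, so the vectors $\{\vec{y_i}\}$ form a feasible solution of (\ref{eq:vp3}) whose objective equals $\tfrac{1}{2}(A\bullet X)$. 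Combining both directions gives a value-preserving correspondence between feasible sets, so the two programs are equivalent and (\ref{eq:vp3}) can be solved in polynomial time through SDP solvers plus one Cholesky decomposition.

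The only technical subtlety, and the step I would flag as the main bookkeeping obstacle, is the factor of two arising from the fact that $A\bullet X$ sums over ordered pairs $(i,j)$ while the vector-program objective sums over unordered edges; this is immediately resolved because a constant multiplicative factor in the objective does not affect the set of optimizers. Beyond that, symmetry of $X$ (which is automatic for any PSD matrix) is what lets the Cholesky-decomposition step go through without loss of generality.
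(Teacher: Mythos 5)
Your proposal is correct and follows essentially the same route as the paper: map vectors to their Gram matrix $X_{ij}=\vec{y_i}\cdot\vec{y_j}$ in one direction and use Lemma~\ref{lem:psd} with a Cholesky factorization $X=VV^T$ in the other. You simply spell out the constraint and objective correspondence (including the factor of two from summing over ordered pairs) that the paper leaves implicit.
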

\begin{proof}
Given solutions of (\ref{eq:vp3}) $\{\vec{v_1}, \vec{v_2}, \cdots \vec{v_m}\}$, the corresponding matrix $X$ is defined as $X_{ij}=\vec{v_i} \cdot \vec{v_j}$.
In the other direction, based on Lemma \ref{lem:psd}, given a matrix $X$ from (\ref{eq:sdp}),  we can find a matrix $V$ satisfying $X=VV^T$ by using the Cholesky decomposition.
The rows of $V$ are vectors $\{v_i\}$ that form the solutions of (\ref{eq:vp3}).
\end{proof}

\subsection{Mapping Algorithm}

Solutions of program (\ref{eq:sdp}) are continuous, while optimal solutions in (\ref{eq:vp}) are discrete.
In this subsection we map the continuous solutions into discrete ones.

In the matrix $X$ generated by SDP, if $X_{ij}$ is close to $1$, then nodes $i$ and $j$ should be in the same mask, while if $X_{ij}$ is close to $-0.5$, node $i$ and node $j$ tend to be in different masks.
Our mapping algorithm is given in Algorithm \ref{alg:Rounding}, which finds the relative relationships among the nodes and maps them into three different masks.
First some triplets are constructed and sorted to store all $X_{ij}$ information.
Then, we carry out our mapping algorithm in two steps.
In the first step, if $X_{ij}$ is close to $1$, the node $i$ and the node $j$ will be in the same mask, while if $X_{ij}$ is close to $-0.5$, they will be labeled to be in different masks.
Here the vectors UnionLevel[k] and SepaLevel[k] are some user defined parameters: UnionLevel[] are close to $1$ and SepaLevel[] are close to $-0.5$.
In the second step, we continue to union the node i and the node j with maximum $X_{ij}$ until all nodes are assigned into three masks.

We use the disjoint-set data structure to group nodes into three masks. Implemented with ``union by rank" and ``path compression", the running time per operation of disjoint-set is almost constant \cite{book90Algorithm}. 
Let $n$ be the number of nodes, and the number of triplets is $n^2$.
Sorting all the triplets requires $O(n^2logn)$. 
Since all triplets are sorted, each of them can be visited at most once.
Because the runtime of each operation can be finished almost in constant time, the complexity of Algorithm \ref{alg:Rounding} is $O(n^2logn)$.

\begin{algorithm}[htb]
\caption{Mapping Algorithm} 
\label{alg:Rounding}
\begin{algorithmic}[1]
	\STATE Solve the program (\ref{eq:sdp}), get a matrix $X$.
	\STATE Label each non-zero entry $X_{i, j}$ as a triplet $(X_{ij}, i, j)$.
	\STATE sort all $(X_{ij}, i, j)$ by $X_{ij}$.
	\FOR{$k=1$ to $R$}
		\FOR{ each triple $(X_{ij}, i, j)$}
			\IF{$X_{ij} >$ UnionLevel[k] \&\& Compatible(i, j)}
				\STATE Union(i, j);
			\ENDIF
		\ENDFOR
		\FOR{ each triple $(x_{ij}, i, j)$}
			\IF{$X_{ij} <$ SepaLevel[k]}
				\STATE Seperate(i, j);
			\ENDIF
		\ENDFOR
	\ENDFOR
	\WHILE{ Masks number $> 3$}
		\STATE Pick triple with maximum $X_{ij}$ and Compatible(i, j);
		\STATE Union (i, j);
	\ENDWHILE
\end{algorithmic}
\end{algorithm}



\subsection{An Example of the SDP Based Algorithm}

\begin{figure}[tb]
	\centering
	\subfigure[]{\includegraphics[width=0.16\textwidth]{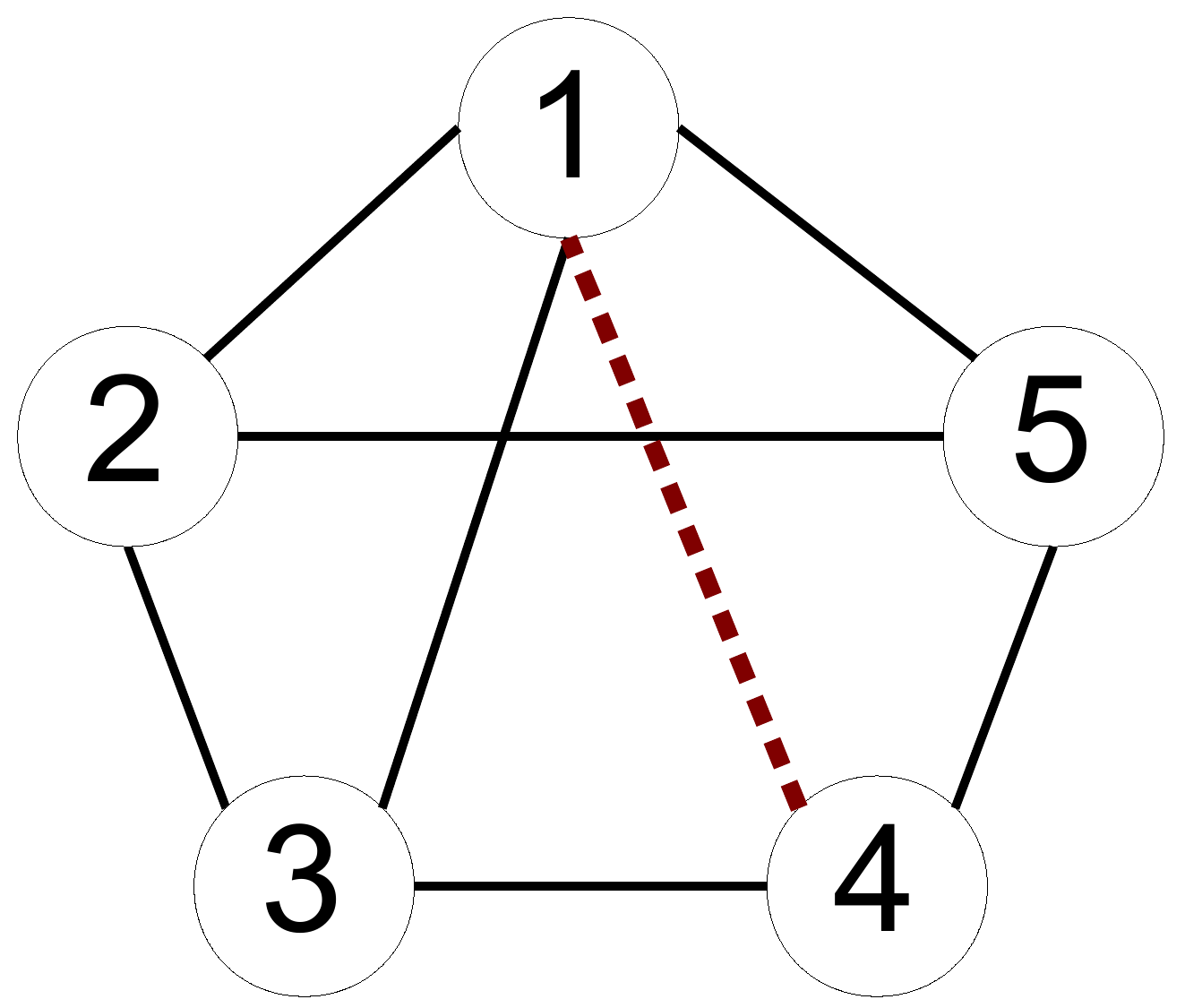}}
	\subfigure[]{\includegraphics[width=0.16\textwidth]{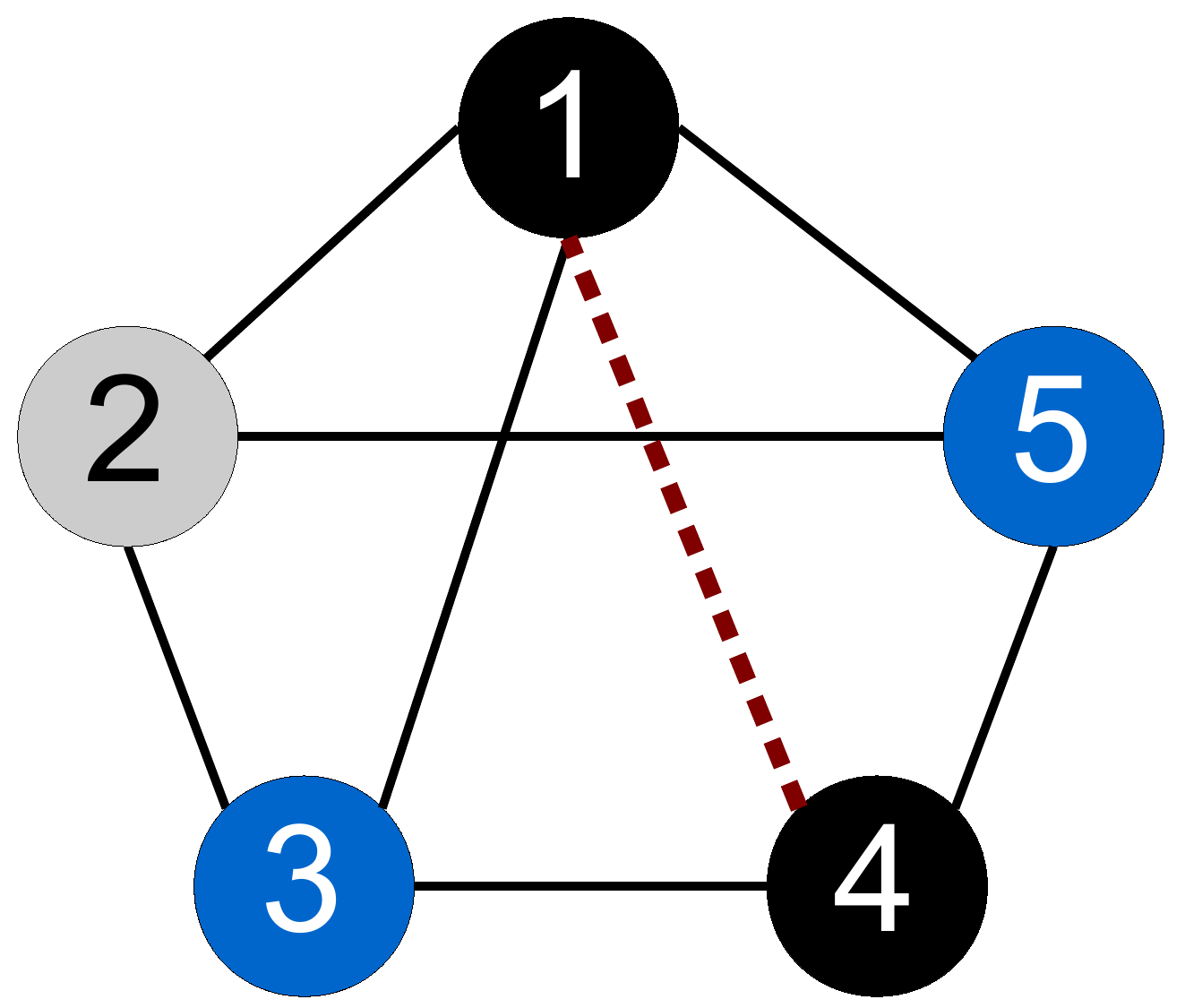}}
	\caption{Example to color decomposition graph.~(a)Input decomposition graph.
	(b)Using semidefinite programming and Algorithm \ref{alg:Rounding}, we assign nodes into 3 different colors (masks).}
	\label{fig:example}
\end{figure}

Fig. \ref{fig:example} shows an example of the decomposition graph.
This graph includes 7 conflict edges and 1 stitch edge, and can be colored with 3 colors.
Moreover, the graph is not 2-colorable since it contains odd cycles.
Here we show how the semidefinite programming can be used to solve TPL layout decomposition problem.

If we set $\alpha = 0.1$, then matrix $A$ is as follow:
\begin{displaymath}
	A = 
	\left(
		\begin{array}{ccccc}
			0 	& 1 	& 1 	& -0.1 	& 1\\
			1 	& 0 	& 1 	& 0 		& 1\\
			1	& 1 	& 0 	& 1 		& 0\\
			-0.1 & 0 	& 1 	& 0 		& 1\\
			1 	& 1 	& 0 	& 1 		& 0
		\end{array}
	\right)
\end{displaymath}

After solving the semidefinite programming (\ref{eq:sdp}), we can get a matrix $X$ as following:
\begin{displaymath}
	X =
	\left(
		\begin{array}{ccccc}
			1.0	& -0.5	& -0.5	& 1.0	& -0.5\\
				& 1.0	& -0.5	& -0.5	& -0.5\\
				&		& 1.0	& -0.5	& 1.0\\
				& \ldots&		& 1.0	& -0.5\\
				&		&		&		& 1.0\\
		\end{array}
	\right)
\end{displaymath}
here we only show the upper part of the matrix $X$.

From the matrix $X$ we can find that node 1 and node 4 should be in the same color (because $X_{14}=1.0$), and node 3 and node 5 should also be in the same color (because $X_{35}=1.0$).
However, since $X_{12}, X_{13}$ and $X_{15}$ are close to $-0.5$, nodes 2, 3 and 5 cannot be assigned in the same color as node 1.
Using Algorithm \ref{alg:Rounding}, we can map all the nodes into three colors: $\{1,4\}, \{2\}$ and $\{3, 5\}$.
The final mapping result is shown in Fig. \ref{fig:example}(b).

\section{Experimental Results}
\label{chap:result}

We implement our algorithm in C++ and test it on an Intel Core 3.0GHz Linux machine with 32G RAM.
OpenAccess2.2 \cite{OpenAccess} is used for interfacing with GDSII directly.
We choose CBC \cite{cbc} as our solver for the integer linear programming, and CSDP \cite{CSDP} as the solver for the semidefinite programming.

ISCAS-85 \& 89 benchmarks are scaled down and modified to reflect the $16$nm technology node.
The metal one layer is used for experimental purposes, because it is one of the most complex layers in terms of layout decomposition.
The minimum width and spacing become $25$nm and $30$nm.
The minimum colorable distance is set as $85$nm, and the minimum overlapping margin for stitch insertion is $10$nm.
The parameter $\alpha$ is set as $0.1$.


\subsection{Comparison}

First, we show the effectiveness of the layout graph simplification and the bridges computation.
Table \ref{tab:result1} compares the Normal ILP and the Accelerated ILP, where the Normal ILP only uses independent component computation technique, while the Accelerated ILP uses all three acceleration techniques.
Columns ``SE\#" and ``CE\#" denote the stitch edge number and conflict edge number respectively.
From Table \ref{tab:result1} we can see that layout graph simplification and bridges computation are quite effective:
the stitch edge number can be reduced by 90\%, while the conflict number can be reduced by 93\%.
The columns ``st\#" and  ``cn\#" show the stitch number and the conflict number in the final design.
``CPU(s)" is computational time in seconds.
Compared with the Normal ILP, the Accelerated ILP can achieve the same results in around 18\% of the runtime.
Note that if no accelerative technique is used, the runtime for ILP is unacceptable even for small circuits like C432. 
From Table \ref{tab:result1} we can see that both ILP formulations can achieve the optimal solutions, because of the same conflict number and stitch number. 

Second, we verify the quality and efficiency of our approximation algorithm based on semidefinite programming.
Table \ref{tab:result1} also compares the Accelerated ILP and the SDP based algorithm, where ``SDP Based" denotes the semidefinite programming based algorithm.
Note that these two methods share the same decomposition graph, i.e. both the stitch edge number and the conflict edge number in their decomposition graph are equal.
As we can see, using SDP based method the runtime can be further reduced by 42\% and the stitch number can be reduced by 7\%.
The tradeoff for this acceleration is the 9\% more conflicts.

\begin{table*}[tb]
\centering
\caption{Runtime and Performance Comparisons }
\label{tab:result1}
\begin{tabular}{|c|c|c|c|c|c|c|c|c|c|c|c|c|c|c|}
	\hline \hline
	Circuit	& Comp\#	& \multicolumn{5}{|c}{Normal ILP}				& \multicolumn{5}{|c|}{Accelerated ILP}	& \multicolumn{3}{|c|}{SDP Based}\\
	\cline{3-15}
			&			& SE\#	&CE\#	& st\#	& cn\#	& CPU(s)		& SE\#	&CE\#	& st\#	& cn\#& CPU(s)	& st\#& cn\#& CPU(s)\\
	\hline
	C432	& 261	& 300&930	& 0	& 1	& 38.01		& 32&136		& 0	& 1	& 1.11		&0&1& 0.26\\
	C499	& 418	& 566&2239	& 0	& 0	& 34.15		& 171&838	& 0	& 0	& 6.34		&0&4& 1.01\\
	C880	& 516	& 844&2219	& 1	& 3	& 25.91		& 6&24		& 1	& 3	& 0.49		&1&3& 0.06\\
	C1355	& 872	& 1008&2543	& 0	& 1	& 23.27		& 2&14		& 0	& 1	& 0.12		&0&1& 0.03\\
	C1908	& 1132	& 1332&4480	& 0	& 1	& 48.66		& 1&14		& 0	& 1	& 0.10		&0&1& 0.03\\
	C2670	& 1501	& 2186&7469	& 0	& 4	& 101.98		& 6&40		& 0	& 4	& 0.50		&0&4& 0.10\\
	C3540	& 1964	& 3197&9283	& 2	& 2	& 3975.25		& 9&40		& 2	& 2	& 0.43		&2&2& 0.11\\
	C5315	& 2767	& 4644&13211	& 5	& 0	& 173.86		& 20&70		& 5	& 0	& 0.70		&5&0& 0.18\\
	C6288	& 3740	& 5319&11394	& ?	& ?	& $>$ 7200	& 259&509	& 9	& 72	& 26.05		&9&72&1.36\\
	C7552	& 4164	& 6591&19187	& ?	& ?	& $>$ 7200	& 64&180		& 10	& 6	& 2.19		&7&9&0.46\\
	S1488	& 588	& 1932&4284	& 0	& 1	& 73.37		& 31&54		& 0	& 1	& 0.50		&0&1& 0.1\\
	S38417	& 9385	&15912&40734	& ?	& ?	& $>$ 7200	& 1617&2724	& 3	& 19	& 20.56		&3&21& 9.33\\
	S35932	& 23565	&25252&71198	& ?	& ?	& $>$ 7200	& 3776&6317	& 3	& 18	& 49.87		&3&22&33.55\\
	S38584	& 24724	&28808&74968	& ?	& ?	& $>$ 7200	& 3657&6066	& 4	& 26	& 44.16		&4&26&34.52\\
	S15850	& 20881	&33134&87358	& ?	& ?	& $>$ 7200	& 3877&6504	& 6	& 34	& 49.18		&6&39&35.92\\
	\hline
	avg.		& -		& 8735&23433	& -	& -			& -	&901.8&1568.7	&2.87&12.53&13.49		&2.67&13.73&7.80\\
	ratio		& -		& 1	& 1	& -	& -	& -			& 0.10&0.07	 & 1&1	    & 1		&\textbf{0.93}&\textbf{1.09}&\textbf{0.58}\\
	\hline \hline
\end{tabular}
\end{table*}

\subsection{Efficiency }

In order to further evaluate the scalability of our SDP based method, we create four additional benchmarks (C1-C4) to test two decomposition algorithms on very dense layouts.
Table \ref{tab:result2} lists the comparison of the Speed-up ILP and the SDP based method on these very dense layouts.
As we can see, compared with the Speed-up ILP, SDP based method can reduce stitch number by 10\% while introduces 5\% more conflicts.
Furthermore, SDP based method can achieve $140\times$ speed-up.
The reason for the dramatically acceleration is that: for a low density layout, the decomposition problem can be divided into many sub-problems, and typically each sub-problem contains no more than 20 nodes. While for high density layout, there are more nodes in each sub-problem, where SDP can be much faster than ILP.


\begin{table}[thb]
\centering
\caption{Comparison on Very Dense Layouts}
\label{tab:result2}
\begin{tabular}{|c|c|c|c|c|c|c|c|c|}
	\hline \hline
	Circuit	& SE\#&CE\#	& \multicolumn{3}{|c}{Accelerated ILP}	& \multicolumn{3}{|c|}{SDP Based}\\
	\cline{4-9}
			&	&	& st\#	& cn\#	& CPU(s)		& st\#	& cn\#	& CPU(s)\\
	\hline
	C1	& 16	&247	& 1 	& 5	& 5.5		& 0 & 6	& 0.29	\\
	C2	& 38	&289& 0	& 15	& 17.32	& 0 & 16	& 0.77 	\\
	C3	& 24	&381& 0 	& 14	& 33.41 	& 0 & 15	& 0.32 	\\
	C4	& 56 &437& 9	& 32	& 203.17	& 9 & 32	& 0.49	\\
	\hline
	avg.	& -	& -	& 2.5	&16.5&64.9	& 2.25&17.3&0.468	\\
	ratio	& -	& -	& 1	& 1	& 1		& \textbf{0.9}	& \textbf{1.05} & \textbf{0.007} \\
	\hline \hline
\end{tabular}
\end{table}

\section{Conclusion}
\label{chap:conclusion}

In this paper, we propose a general integer linear programming (ILP) formulation for the TPL layout decomposition to simultaneously minimize the conflicts  and stitches.
To improve scalability, we develop three acceleration techniques without losing solution quality:
layout graph simplification, independent component computation and bridges computation. 
Furthermore, we propose a novel semidefinite programming algorithm to improve scalability for very dense layouts.
Experimental results show that our methods are very effective. Since this is the first systematic attempt on TPL layout decomposition for general layouts, we expect to see a lot of researches as TPL may be adopted by industry in the near future.

\vspace{-.1in}
\bibliographystyle{IEEEtran}
\bibliography{./Ref/DPL,./Ref/Coloring,./Ref/LayerAssignment,./Ref/Algorithm}

\end{document}